\title{\LARGE \bf
On the stability of event-based control with neuronal dynamics
}
\author{Luke Eilers, Jonas Stapmanns, Catarina Dias, and Jean-Pascal Pfister
\thanks{This work has been supported by the Swiss National Science Foundation grant entitled ``Why spikes?'' (310030\_212247).}
\thanks{Luke Eilers and Catarina Dias are with the Graduate School for Cellular and Biomedical Sciences, University of Bern, 3012 Bern, Switzerland.}
\thanks{All authors are with the Department of Physiology, University of Bern, 3012 Bern, Switzerland.}%
\thanks{Correspondence: \{{\tt\scriptsize luke.eilers},{\tt\scriptsize jeanpascal.pfister}\}{\tt\scriptsize @unibe.ch}}
}
\def\@begintheorem#1#2{\par\vspace{0.4\baselineskip}\noindent\textbf{#1~#2 } \itshape}
\def\@opargbegintheorem#1#2#3{\par\vspace{0.4\baselineskip}\noindent\textbf{#1~#2~(#3):} \itshape}
\def\@endtheorem{\par\vspace{0.0\baselineskip}} 
\newtheorem{thm}{\bf Theorem} 
\newtheorem{lem}[thm]{\bf Lemma}
\newtheorem{cor}[thm]{\bf Corollary}
\newtheorem{rem}[thm]{\bf Remark}
\newtheorem{definition}[thm]{\bf Definition}
\newcommand{\R}{\mathbb{R}}
\newcommand{\del}{\textnormal{d}}
\newcommand{\xc}{x_c}
\newcommand{\ui}[1]{\mathcal{I}^{#1}}
\newcommand{\xcd}[1]{#1 x_c}
\newcommand{\ubar}[1]{\underaccent{\bar}{#1}}
\begin{document}

\maketitle
\thispagestyle{empty}
\pagestyle{empty}

\begin{abstract}

Event-based control, unlike analogue control, poses significant analytical challenges due to its hybrid dynamics. This work investigates the stability and inter-event time properties of a control-affine system under event-based impulsive control. The controller consists of multiple neuronal units with leaky integrate-and-fire dynamics acting on a time-invariant, multivariable plant in closed loop. Both the plant state and the neuronal units exhibit discontinuities that cancel if combined linearly, enabling a direct correspondence between the event-based impulsive controller and a corresponding analogue controller. Leveraging this observation, we prove global practical stability of the event-based impulsive control system. In the general nonlinear case, we show that the event-based impulsive controller ensures global practical asymptotic stability if the analogue system is input-to-state stable (ISS) with respect to specific disturbances. In the linear case, we further show global practical exponential stability if the analogue system is stable. We illustrate our results with numerical simulations. The findings reveal a fundamental link between analogue and event-based impulsive control, providing new insights for the design of neuromorphic controllers.
\end{abstract}

\section{Introduction}

Both neuromorphic engineering and event-based control draw inspiration from biological systems~\cite{mead_neuromorphic_1990, mead2012analog, deweerth1991simple,aaarzen1999simple}. Unlike periodic sampling, event-based sampling triggers updates only when necessary, reducing sensing, communication, and computation effort~\cite{astrom2002comparison,heemels2012anintroduction, antunes2016consistent}, which is particularly beneficial in distributed and networked systems~\cite{mahmoud2014networked}. Efficiency is likewise critical in the brain, where neurons organised in networks communicate via electrical pulses, so-called spikes, which makes brains inherently event-based and impulsive~\cite{rieke1999spikes}. Moreover, the brain can be viewed as a closed-loop control system~\cite{ahissar2016perception, cisek1999beyond, moore2024theneuron}.
Despite this strong biological motivation, theoretical guarantees for neuromorphic and event-based control systems remain difficult to obtain due to their hybrid dynamics,
even though significant progress has been made in the last two decades~\cite{tabuada2007event, aranda2020bibliometric, heemels2012anintroduction}.

Several approaches have been developed to analyse event-based control systems. A common strategy is the hybrid systems framework using Lyapunov functions~\cite{goebel2012hybrid, haddad_impulsive_2014}. If the Lyapunov function decreases during both continuous and discontinuous dynamics, stability and gain properties follow \cite{donkers2012output, heemels2013periodic, postoyan2011lyapunov}. Alternative approaches model the system as piecewise-linear~\cite{heemels2008analysis,heemels2013periodic, heemels2012anintroduction} or perturbed-linear (PL) systems~\cite{heemels2013periodic}.

In contrast, event-based impulsive control is less explored but particularly interesting, since impulsive control inherently involves discrete events: each control action is an impulse applied at an event time, where both the timing and amplitude determine the closed-loop behaviour \cite{yang2002impulsive}. The seminal work \cite{astrom2002comparison}, later extended in \cite{meng2012optimal}, quantified the advantage of event-based over periodic sampling in a stochastic setting by considering event-based impulsive control. From a biological perspective, event-based impulsive control also provides a natural abstraction for spike-based controllers. 
The stability analysis of such systems is more challenging, since possible increases of the Lyapunov function between impulses have to be compensated by decreases due to the impulses (see \cite{yang2002impulsive, petri2024analysis, chai2017analysis, li2020lyapunov} and Chapter 3.2 of \cite{goebel2012hybrid}). This trade-off is often studied by discretising the dynamics at the event times~\cite{petri2024analysis}.

We study an event-based impulsive controller with neuronal dynamics. Similar setups have been investigated both in neuromorphic engineering \cite{deweerth1991simple,petri2024analysis} and computational neuroscience \cite{agliati2025spiking, slijkhuis2023closed, huang2018dynamical}, where it has been shown that neuronal dynamics can arise from a normative impulsive control law with a finite set of admissible impulses \cite{agliati2025spiking}. However, the stability properties of such systems remain only partially understood, limiting the applicability of neuromorphic controllers in reliable control design. We tackle this issue by proving practical stability and the existence of minimum inter-event times for both linear and nonlinear dynamics, thereby extending the results of \cite{petri2024analysis}. 

Our key observation is that the plant state and neuronal variables can be combined linearly such that their discontinuities cancel, similar to the approach in \cite{thalmeier2016learning}. The dynamics of the resulting continuous variable can be interpreted as a perturbed version of a corresponding analogue control system, similar to the approach of \cite{petri2025spiking,petri2025emulation}, where the event-based impulsive control system is viewed as an analogue control system with hybrid perturbations. A similar viewpoint is also taken in the PL systems approach~\cite{heemels2013periodic}, where a discretised event-based system is viewed as being perturbed compared to the control system with periodic sampling. 
This perspective allows us to analyse the system in continuous time using standard arguments without discretisation at event times. Importantly, we show that the stability of the event-based impulsive control system can be directly inferred from the corresponding analogue control system, thereby providing a novel method for the stability analysis and revealing a fundamental link between the two control paradigms.

In many works, event triggering is based on Lyapunov functions \cite{tabuada2007event, dimarogonas2012distributed, li2020lyapunov, wang2011event}. In contrast, our triggering mechanism is biologically inspired and effectively implements quantisation with hysteresis. This makes our approach similar to \cite{kofman2006level}, where quantisation occurs at both the sensor and actuator. However, we consider impulsive control rather than zero-order hold, and there is only one quantisation step.

The paper is organised as follows. Section \ref{sec:definition} provides the necessary notations and definitions, and formalises the event-based impulsive control system. In Section \ref{sec:theory}, we provide stability guarantees (\ref{subsec:stability}), lower bounds for the inter-event times (\ref{subsec:interevent}), and an extension to connected neuronal units (\ref{subsec:extension}). Section \ref{sec:numerics} illustrates the theoretical results with numerical simulations for linear systems. We conclude by discussing possible extensions to our work in Section \ref{sec:conclusion}.

\section{Preliminaries}
\label{sec:definition}

\subsection{Notation and definitions}

We denote by $\R_{\geq 0}$ the non-negative real numbers. We denote by $\mathcal{I} \coloneqq [0,1]{}$ the unit interval. We denote by $\lVert x \rVert$ the Euclidean norm of a vector $x\in \R^n$ and by $\lVert A\rVert$ the spectral norm of a matrix $A \in \R^{n\times m}$. We denote by $A^\intercal \in \R^{m\times n}$ the transpose of the matrix $A$. We denote by $I$ the identity matrix.
For $a_1,\dots,a_n \in \R$, $D =\mathrm{diag}(a_1,\dots,a_n) \in \R^{n\times n}$ denotes the diagonal matrix with entries $D_{ii} = a_i$ for all $i = 1,\dots,n$. We write $A \succ 0$ ($A \succeq 0$) if the matrix $A$ is symmetric positive (semi-)definite. We denote by $\lambda_\mathrm{min}(A)$ and $\lambda_\mathrm{max}(A)$ the smallest and the largest eigenvalue of $A$, respectively. We denote by $\kappa(A) \coloneqq \lambda_\mathrm{max}(A) /\lambda_\mathrm{min}(A)$ the condition number of the matrix $A$. We say that a matrix $A$ is Hurwitz if $\mathrm{Re(\lambda_i)<0}$ for all eigenvalues $\lambda_i$ of $A$. We denote by $\delta(t)$ the Dirac delta function. We denote by $[x]_+ \coloneqq\max(x,0)$ the positive part of $x \in \R$.

We say that a continuous function $\alpha \colon [0,a) \to [0,\infty)$ is of class $\mathcal{K}$ if it is strictly increasing and $\alpha(0)=0$. We say that a function $\alpha$ of class $\mathcal{K}$ is of class $\mathcal{K}_\infty$ if $a = \infty$ and $\lim_{r \to \infty} \alpha(r) = \infty$. We say that a function $\beta \colon [0,\infty)\times [0,\infty) \to [0, \infty)$ is of class $\mathcal{KL}$ if for any $t \geq 0$ the mapping $r \mapsto \beta(r,t)$ is of class $\mathcal{K}$, and for any $r \geq 0$ the mapping $t \mapsto \beta(r,t)$ is decreasing and $\lim_{t \to \infty} \beta(r,t) = 0$.

\begin{definition}[Global practical stability]
Consider the dynamical system $\dot x(t) = h(x(t))$, $x(t)\in \R^n$, with $h(0)=0$. We say that the system is
    \begin{enumerate}
        \item globally practically asymptotically stable if there exists a  function $\beta$ of class $\mathcal{KL}$ and constant $C\geq0$ such that
        \begin{equation}
            \lVert x(t) \rVert \leq \beta(\lVert x(0) \rVert,t)+C, \notag
        \end{equation}
        \item globally practically exponentially stable if there exist constants $\alpha < 0$ and $D,C\geq0$ such that
        \begin{equation}
            \lVert x(t) \rVert \leq D\,\lVert x(0)\rVert \cdot e^{\alpha t} + C. \notag
        \end{equation}
    \end{enumerate}
    In both cases, it holds that $\limsup_{t\to\infty} \lVert x(t)\rVert \leq C$ and we call $C$ the ultimate bound. If $C = 0$, the term ``practically'' is omitted.
\end{definition}

\subsection{Control loop and state dynamics}

We consider a multivariable control-affine system,
\begin{equation}
    \dot x(t) = f(x(t)) + u(t), \quad x(0) = x_0, \notag
\end{equation}
where $x(t) \in \R^K$, $K\geq 1$, is the state, $u(t) \in \R^K$ the control input, and $f : \R^K \to \R^K$ is a linear or nonlinear drift function. The objective of the control input is to drive the state towards a target state $x^*(t)$. We denote the error between state and target state by $e(t) = x(t) - x^*(t)$. In the case of analogue control, we assume that the control input is a function of the error, i.e., $u(t) = k(e(t))$. For instance, a proportional controller is given by $k(e) = -\alpha \,e$, $\alpha > 0$.

Here, we consider an event-based and impulsive control mechanism using $N$ neuronal units, which generate events upon threshold crossing of their respective neuronal variables. The event times determine when impulses are applied to the state.  If $t_j^i$ denotes the $j$-th event time of unit $i$, we can write the so-called spike trains as $s_i(t) = \sum_j \delta(t-t_j^i)$, $i \in \{1,\dots,N\}$. An input matrix $B \in \R^{K \times N}$ then maps the vector of spike trains $s(t)$ to the state space. This defines the impulsive control input, $u(t) = B s(t)$. The dynamics of the state are hence given by
\begin{equation}
    \dot x(t) = f(x(t)) + Bs(t), \quad x(0) = x_0.  \label{eq:dyn_sys_ebic:1}
\end{equation}

In the absence of events, the system evolves according to $\dot x(t) = f(x(t))$. Assuming right-continuity of all variables, at event time $t=t_j^i$ it holds that $x(t) = \lim_{s \uparrow t} x(s) + B_i$, where $B_i$ denotes the $i$-th column of $B$.
This description is used by the hybrid systems formalism \cite{goebel2012hybrid, haddad_impulsive_2014}.
Here, we instead use the notation \eqref{eq:dyn_sys_ebic:1} to propose a novel method.


We denote the neuronal variables by $z_i(t)$, and the respective thresholds by $\theta_i>0$, $i \in \{1,\dots,N\}$. Then, the event-triggering by threshold crossing can be written as
\begin{equation}
    s_i(t) = \delta_{z_i(t) = \theta_i} \quad \text{for } i \in \{1,\dots,N\} .\label{eq:dyn_sys_ebic:2}
\end{equation}
The dynamics of $z_i$ are inspired by the leaky integrate-and-fire neuron model \cite{gerstner2002spiking}, and are given by
\begin{equation}
    \dot z_i(t) = - \lambda_i z_i(t) + g_i(t) - \theta_i s_i(t), \quad z_i(0)=0,
    \label{eq:LIF_dynamics}
\end{equation}
where $\lambda_i \geq 0$ is the leakage constant and $g_i(t)$ is the input being integrated. Here, we assume that $g_i(t)$ is a function of the error $e(t)$ and non-negative, $g_i(t) = g_i(e(t)) \geq 0$. Upon reaching the threshold, each neuronal variable is reset to zero due to the term $-\theta s_i(t)$, such that $z_i(t)\in [0,\theta_i]$. Hence, the neuronal variables can be thought of as membrane potentials. The closed loop between plant and controller is illustrated in Fig. \ref{fig:1a_control_loop}, and example trajectories are presented in Fig. \ref{fig:2_1dim} a).

We can write the neuronal variables' dynamics in vector notation. Let $\Theta = \mathrm{diag}(\theta_1,\dots,\theta_N)$ be the reset matrix and $\Lambda = \mathrm{diag}(\lambda_1,\dots,\lambda_N)$ the leakage matrix. Moreover, let $g \colon \R^K \to \R_{\geq 0}^N$ be the input function and let $z(t) \in \R^N$ be the vector of neuronal variables. Then, we obtain
\begin{equation}
    \dot z(t) = -\Lambda z(t) + g(e(t)) - \Theta s(t), \quad z(0) = 0. \label{eq:dyn_sys_ebic:3}
\end{equation}

The key observation for our method is that both the state and the neuronal variables have the same source of discontinuities, $s(t)$. If we consider the auxiliary variable $\xc(t) \coloneqq x(t) + B\Theta^{-1}z(t)$, the discontinuities cancel out:
\begin{alignat}{2}
    \xcd{\dot}(t) &= &&\dot x(t) + B\Theta^{-1} \dot z(t)  \notag \\
    &= &&f(x(t)) + Bs(t) - B\Theta^{-1}\Lambda z(t)  \notag \\
    &&&+ B\Theta^{-1}g(e(t)) -B\Theta^{-1}\Theta s(t)  \notag \\
    &=&& f(x(t)) + B\Theta^{-1}g(e(t)) - B\Theta^{-1}\Lambda z(t) . \label{eq:auxil_var}
\end{alignat}
We write ``c'' since $x_c$ is the continuous version of $x$. Fig.~\ref{fig:2_1dim} and Fig. \ref{fig:3_2dim} show trajectories of $x(t)$ and $x_c(t)$. 

To interpret the auxiliary variable $\xcd{}(t)$, imagine that at any time $t$, the $i$-th neuronal unit could be ``asked'' to generate an event and emit a hypothetical impulse rescaled by how close the neuronal variable is to its threshold, i.e., $B_i z_i(t)/\theta_i$. 
Applying the hypothetical impulses of all neuronal units to the state amounts to a jump from $x(t)$ to $\xcd{}(t)$.

In the following section, we will analyse the practical stability of $x(t)$, which can be inferred from the practical stability of $\xc(t)$ by noticing that
\begin{align}
    \lVert \xc(t) - x(t) \rVert &= \lVert B\Theta^{-1} z(t)\rVert  \notag \\
    &\leq \sup_{s \in \ui{N}} \lVert Bs \rVert \eqqcolon \lVert B \rVert_{\ui{N}}, \label{eq:bound_y-x}
\end{align}
where we used that $z_i(t)/\theta_i \in \mathcal{I}$ for all $i \in \{1,\dots,N\}$.

\begin{figure}[htbp]
\centering
\vspace{2.5mm}
{\includegraphics[width=0.9\columnwidth]{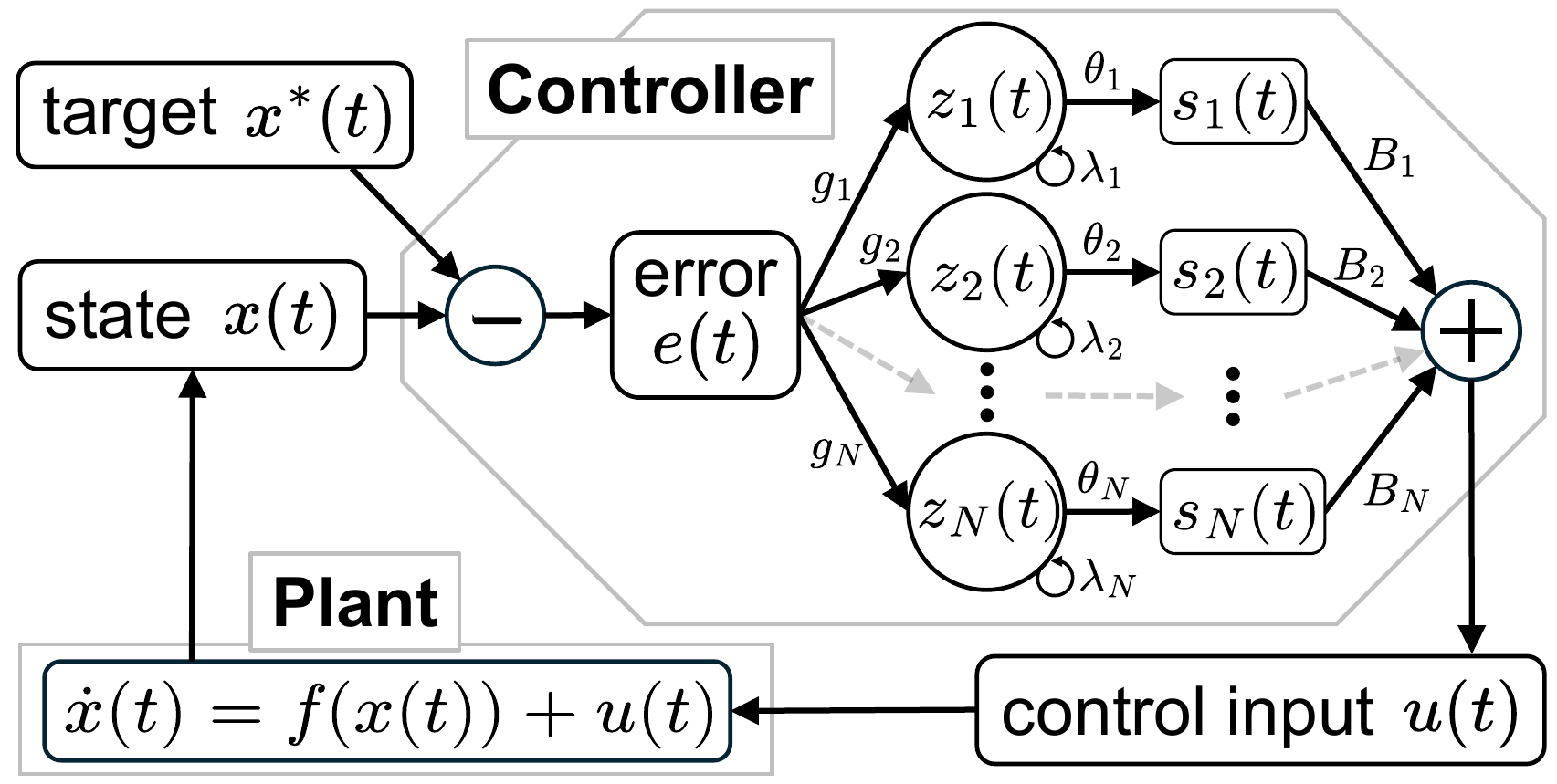}}
\caption{
The control loop between the controller, which receives the error and generates a control input, and the plant. 
}
\label{fig:1a_control_loop}
\end{figure}

\section{Theoretical results}
\label{sec:theory}

\subsection{Stability}
\label{subsec:stability}

The event-based impulsive control system evolves uncontrolled between events, so a Lyapunov function typically decreases only at the impulses. Such systems can be analysed within the hybrid-systems framework by discretising them at the event times \cite{petri2024analysis}. This approach works well when the evolution between two events, $t_k$ and $t_{k+1}$, depends only on the initial state $x(t_k)$, as in the setup of \cite{petri2024analysis} with two neuronal units. For an arbitrary number of units, however, the analysis quickly becomes intractable unless the units are sequentially active. Instead, we will consider a Lyapunov function of $x_c$, which eliminates the need for discretisation.


In the following analysis, we will consider only a constant target state $x^*(t) = 0$ for the sake of simplicity. The error is then given by $e(t) = x(t)$. 

We will first apply our method to linear dynamics with scalar states, then to linear dynamics with multivariable states, and finally to general nonlinear dynamics. For the first case, we assume that $f$ is linear, the state is scalar, $K=1$, and that we have two neuronal units, $N=2$. 
This is already a slight generalisation of the setup in \cite{petri2024analysis}, where a specific choice of the input function $g$ was analysed.

\begin{thm}[Linear dynamics, scalar state]
\label{thm:stability_linear_scalar}
    Let $x(t) \in \R$ be the state of an event-based impulsive control system as in \eqref{eq:dyn_sys_ebic:1}--\eqref{eq:dyn_sys_ebic:3} with input matrix $B \in \R^{1\times 2}$, input function $g \colon \R \to \R_{\geq 0}^2$, reset matrix $\Theta=\mathrm{diag}(\theta,\theta)$ with $\theta>0$, leakage matrix $\Lambda = \mathrm{diag}(\lambda,\lambda)$ with $\lambda \geq 0$, and drift function $f(x) = ax$ with $a \in \R$. Let $k(x) \coloneqq B\Theta^{-1}g(x)$ be linear, i.e., $k(x)=-bx$ for some $b \in \R$. If $b>a$, the system is globally practically exponentially stable,
    \begin{equation}
        | x(t) | \leq | x(0)| \cdot e^{(a-b)\cdot t/2} + \lVert B \rVert_{\ui{2}}\left(1+\frac{|a-b+\lambda|}{|a-b|} \right) .
        \label{eq:thm_stab_lin_0}
    \end{equation}
\end{thm}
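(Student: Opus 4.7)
The plan is to reduce everything to a scalar linear ODE with bounded disturbance by working with the auxiliary variable $x_c$, then use a standard Lyapunov argument, and finally transfer the estimate back to $x$ via the bound in \eqref{eq:bound_y-x}.

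First, I would specialise the general identity \eqref{eq:auxil_var} to this setting. With $f(x)=ax$, $B\Theta^{-1}g(x)=-bx$, $\Lambda=\lambda I$, and $e(t)=x(t)$, equation \eqref{eq:auxil_var} becomes
\begin{equation}
    \dot x_c(t) = (a-b)\,x(t) - \lambda\,B\Theta^{-1}z(t). \notag
\end{equation}
Using $B\Theta^{-1}z(t) = x_c(t)-x(t)$ to eliminate $z$, and then substituting $x(t)=x_c(t)-B\Theta^{-1}z(t)$ on the right-hand side, I obtain a closed scalar ODE in $x_c$:
\begin{equation}
    \dot x_c(t) = (a-b)\,x_c(t) + w(t), \qquad w(t) \coloneqq -(a-b+\lambda)\,B\Theta^{-1}z(t). \notag
\end{equation}
By \eqref{eq:bound_y-x}, $|w(t)| \le |a-b+\lambda|\,\lVert B\rVert_{\ui{2}} \eqqcolon W$, so $x_c$ satisfies a Hurwitz scalar ODE (since $b>a$) driven by a uniformly bounded input.

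Next I would apply a quadratic Lyapunov argument to $V(t)=x_c(t)^2$, which gives
\begin{equation}
    \dot V = 2(a-b)V + 2x_c w \le (2(a-b)+\eta)V + W^2/\eta \notag
\end{equation}
by Young's inequality for any $\eta>0$. Choosing $\eta = b-a$ yields $\dot V \le -(b-a)V + W^2/(b-a)$, and Grönwall then gives $V(t)\le V(0)e^{-(b-a)t}+W^2/(b-a)^2$. Using $\sqrt{p+q}\le\sqrt p+\sqrt q$ and the fact that $x_c(0)=x(0)$ (because $z(0)=0$), I get
\begin{equation}
    |x_c(t)| \le |x(0)|\,e^{(a-b)t/2} + \tfrac{|a-b+\lambda|}{|a-b|}\,\lVert B\rVert_{\ui{2}}. \notag
\end{equation}
Finally, the triangle inequality combined with \eqref{eq:bound_y-x}, $|x(t)|\le|x_c(t)|+\lVert B\rVert_{\ui{2}}$, yields exactly \eqref{eq:thm_stab_lin_0}.

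The main subtlety — really the only non-mechanical step — is the reduction to a single scalar ODE in $x_c$ with a genuinely bounded disturbance: one must verify that the elimination of $z$ really produces a closed equation in $x_c$ alone, and that the residual term $-(a-b+\lambda)B\Theta^{-1}z$ is bounded uniformly in time thanks to the reset mechanism keeping $z_i/\theta\in[0,1]$. Once this is in place, the exponential stability estimate and the choice of $\eta$ in Young's inequality are routine, and the factor $1/2$ in the exponent of \eqref{eq:thm_stab_lin_0} is precisely the one dictated by squaring for the Lyapunov function and square-rooting at the end.
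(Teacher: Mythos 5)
Your proposal is correct and follows essentially the same route as the paper's proof: the same auxiliary variable $x_c$, the same rewriting $\dot x_c=(a-b)x_c-(a-b+\lambda)B\Theta^{-1}z$, the quadratic Lyapunov function $V=x_c^2$ with Young's inequality, the comparison/Grönwall lemma, and the final triangle inequality via \eqref{eq:bound_y-x}. Your choice $\eta=b-a$ is just a reparametrisation of the paper's $\eta=(b-a)/|d|$ and yields the identical decay rate and ultimate bound.
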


\begin{proof}
    With $\xc(t) \coloneqq x(t) + B\Theta^{-1}z(t)$, we see that
    \begin{align}
        \xcd{\dot}(t) &= \dot x(t) + B\Theta^{-1}\dot z(t)  \notag \\
        \overset{\eqref{eq:auxil_var}}&{=} ax(t) -bx(t) -\lambda Bz(t)/\theta  \notag \\
        &= (a-b)\xc(t) - (a-b+\lambda)Bz(t)/\theta . \label{eq:thm_stab_lin_1}
    \end{align}
    Let $d\coloneqq a-b+\lambda$. We consider the Lyapunov function candidate $V(\xc) = \xc^2$, and obtain
    \begin{align}
        \dot V(\xc(t)) &= 2\xc(t)\xcd{\dot}(t) \notag \\
        \overset{\eqref{eq:thm_stab_lin_1}}&{=} 2(a-b)\xc(t)^2 - 2d \xc(t) B (z(t)/\theta)  \label{eq:thm_stab_lin_2}\\
        \overset{\eqref{eq:bound_y-x}}&{\leq} 2(a-b)\xc(t)^2 + 2|d| |\xc(t)| \lVert B \rVert_{\ui{2}} \notag  \\
        &\leq 2(a-b)\xc(t)^2 + |d|\big(\eta \xc(t)^2 + \lVert B \rVert_{\ui{2}}^2/\eta \big)  \notag  \\
        &\leq \big(2(a-b)+|d|\eta \big)\xc(t)^2 + |d|\lVert B \rVert_{\ui{2}}^2/\eta , \notag 
    \end{align}
    for any $\eta>0$. Here, we used Young's inequality for products, $2ab\leq \eta \, a^2+b^2/\eta$ for any $a,b\geq 0$. This implies with $\alpha \coloneqq 2(a-b)+|d|\,\eta$ and $C\coloneqq |d| \cdot\lVert B \rVert_{\ui{2}}^2/\eta$ that
    \begin{equation}
        \dot V(\xc(t)) \leq \alpha V(\xc(t)) + C, \notag 
    \end{equation}
    which yields for $U(t) = V(\xc(t)) + C/\alpha$ that
    \begin{equation}
        \dot U(t) = \dot V(\xc(t)) \leq \alpha V(\xc(t)) + C = \alpha U(t). \notag 
    \end{equation}
    The comparison lemma (see Lemma \ref{lem:comparison} in the Appendix) shows that
    \begin{align}
        \xc(t)^2 
        &= V(t) 
        = U(t) -C/\alpha 
        \leq U(0)e^{\alpha t} - C/\alpha \notag \\
        &=(\xc(0)^2 + C/\alpha)e^{\alpha t} - C/\alpha \notag \\
        &= \xc(0)^2 e^{\alpha t} - C/\alpha(1-e^{\alpha t}). \label{eq:thm_stab_lin_3}
    \end{align}
    If $\alpha<0$ and $t\to\infty$, \eqref{eq:thm_stab_lin_3} equals $-C/\alpha$. We minimise $-C/\alpha$ by choosing $\eta = \frac{b-a}{|d|}>0$, which yields $\alpha = a-b<0$ and $-C/\alpha = \lVert B\rVert^2_{\ui{2}} |d|^2/(a-b)^2$. Finally, we obtain
    \begin{align}
        | x(t) | &\leq | \xc(t) | + \lVert B \rVert_{\ui{2}} \notag \\
        &\leq | x(0)| \cdot e^{(a-b)\cdot t/2} + \lVert B \rVert_{\ui{2}}\left(1+\frac{|d|}{|a-b|} \right). \notag 
    \end{align}
\end{proof}

If $\lambda= b-a$, the term $1+\frac{|a-b+\lambda|}{|a-b|}$ is minimal, and we obtain $\lVert B \rVert_{\ui{2}}$ as the ultimate bound. Conversely, the ultimate bound tends to infinity as $\lambda \to \infty$. In the case without leakage, $\lambda=0$, the ultimate bound is given by $2\lVert B \rVert_{\ui{2}}$. 

How can the analogue control input $k(x) = B\Theta^{-1}g(x)$ be interpreted? If we consider \eqref{eq:LIF_dynamics} with a constant input $g_i(x(t))$ and without leakage, the event rate of unit $i$ is given by  $r_i(t) = g_i(x(t))/\theta_i$, which yields $k(x) = Br(x)$. When many events are generated, $g(x(t)) \gg 0$, the leakage is negligible and the rate approximates the impulsive output well. Hence, it holds that $Bs(t) \approx k(x)$ when integrated. The regime where many events are generated is precisely the one that determines the stability, which is why the stability of the corresponding analogue system implies the global practical stability of the event-based impulsive control system.


The ultimate bound in \eqref{eq:thm_stab_lin_0} reflects that the impulsive control is better approximated by a rate when the impulses are small. If the uncontrolled dynamics are unstable, the ultimate bound can in general never be zero and exponential stability cannot be shown, as already observed in \cite{petri2024analysis}. The ultimate bound increases as the leakage increases, which is to be expected since the units will generate fewer events. On the other hand, whether or not the system is practically stable does not depend on the choice of $\lambda$, since the leakage becomes negligible when the time between events is small.

Next, we provide simple conditions that guarantee an ultimate bound of $2\lVert B \rVert_{\ui{2}}$ and $\lVert B \rVert_{\ui{2}}$, respectively:

\begin{cor}[Simplified bound for Theorem \ref{thm:stability_linear_scalar}]
\label{cor:stability_improved}
    If the assumptions of Theorem \ref{thm:stability_linear_scalar} hold and $b \geq a + \lambda$, the event-based impulsive control system is globally practically exponentially stable,
    \begin{equation}
        | x(t) | \leq | x(0)|\cdot e^{(a-b)\cdot t/2} + 2\lVert B \rVert_{\ui{2}} .
        \label{eq:cor_lin_1}
    \end{equation}
\end{cor}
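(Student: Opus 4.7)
The plan is to obtain the corollary as a direct consequence of Theorem~\ref{thm:stability_linear_scalar} by checking that the extra hypothesis $b \ge a + \lambda$ suffices to bound the geometric factor $1 + |a-b+\lambda|/|a-b|$ appearing in \eqref{eq:thm_stab_lin_0} by $2$.

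First I would verify that the hypotheses of Theorem~\ref{thm:stability_linear_scalar} are in force, which is given by assumption, so in particular $b > a$ and $\lambda \ge 0$. Hence $a - b < 0$ so that $|a-b| = b - a > 0$, and the exponential factor $e^{(a-b)t/2}$ is decaying. Next I would analyse the sign of $d \coloneqq a - b + \lambda$: the extra hypothesis $b \ge a + \lambda$ rewrites as $d \le 0$, so $|d| = b - a - \lambda$. Since $\lambda \ge 0$, this yields $|d| \le b - a = |a-b|$, and therefore
\begin{equation}
    1 + \frac{|a-b+\lambda|}{|a-b|} \;=\; 1 + \frac{b - a - \lambda}{b-a} \;\le\; 2. \notag
\end{equation}

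Plugging this estimate into \eqref{eq:thm_stab_lin_0} of Theorem~\ref{thm:stability_linear_scalar} gives precisely the bound \eqref{eq:cor_lin_1}, and global practical exponential stability then follows from the same $\beta(\cdot,t)$ and the (worsened, but still finite) ultimate bound $C = 2\lVert B \rVert_{\mathcal{I}^2}$.

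There is no real obstacle here: the content of the corollary is a convenient parameter regime in which the ultimate bound admits a clean universal form independent of $\lambda$. The only thing to be slightly careful about is the sign discussion for $d$, which relies on combining $\lambda \ge 0$ (from the standing assumptions on the leakage matrix) with $b \ge a + \lambda$; no new Lyapunov argument or discretisation is required, since all of the work has already been done in the proof of Theorem~\ref{thm:stability_linear_scalar}.
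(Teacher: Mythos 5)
Your proposal is correct and follows essentially the same route as the paper: both use $b \geq a+\lambda$ to evaluate $|a-b+\lambda| = b-a-\lambda$ and then bound $1 + (b-a-\lambda)/(b-a) \leq 2$ before substituting into \eqref{eq:thm_stab_lin_0}. No further comment is needed.
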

\begin{proof}
    If $b \geq a + \lambda $, it holds that $|a - b + \lambda| = b-a-\lambda$, and this yields
    \begin{align}
        1+ \frac{|a-b+\lambda|}{|a-b|} = 1+ \frac{b-a-\lambda}{b-a} = 2-\frac{\lambda}{b-a} \leq 2. \notag 
    \end{align}
\end{proof}

\begin{thm}[Tighter bound for Theorem \ref{thm:stability_linear_scalar}]
\label{thm:stability_improved_2}
    Let the assumptions of Theorem \ref{thm:stability_linear_scalar} hold, and let $g(x)=(g_1(x),g_2(x))$. If $b \geq a + \lambda$, and if $g_1(x)=0$ if $x\leq0$, and $g_2(x)=0$ if $x\geq0$, then the event-based impulsive control system is globally practically exponentially stable,
    \begin{equation}
        | x(t) | \leq | x(0) | \cdot e^{(a-b)\cdot t} + \lVert B \rVert_{\ui{2}}.
        \label{eq:cor_lin_2}
    \end{equation}
\end{thm}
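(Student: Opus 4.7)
The plan is to exploit the sign structure of $g_1,g_2$ to sharpen both the rate and the ultimate bound obtained in Theorem~\ref{thm:stability_linear_scalar}. I would split the argument into four steps, centred on a sign-invariance lemma that forces at most one neuronal unit to be "active'' at any time.

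First I would prove the structural lemma: under the one-sided conditions on $g_1,g_2$ together with $z(0)=0$, one has $z_2(t)=0$ whenever $x(t)\geq 0$ and $z_1(t)=0$ whenever $x(t)\leq 0$. The argument is that, between events, $\dot x = ax$ preserves the sign of $x$, so sign crossings of $x$ can occur only at firings; each firing of unit $i$ resets $z_i$ to zero; and $z_i$ can only grow while $g_i(x)>0$, which requires $x$ to have the matching sign. In particular $x(t)\cdot Bz(t)\leq 0$ for every $t\geq 0$.

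Next I would use this structural property to tighten the cross term in the Lyapunov analysis. Setting $\tilde B \coloneqq Bz/\theta$, the sign invariance gives
\begin{equation*}
\xc\, \tilde B \;=\; x\, \tilde B + \tilde B^2 \;\leq\; \tilde B^2.
\end{equation*}
Plugging this into $\dot V(\xc)=2(a-b)\xc^2-2d\,\xc\,\tilde B$, and using $-d\geq 0$ (from $b\geq a+\lambda$), I obtain
\begin{equation*}
\dot V \;\leq\; 2(a-b)\,V + 2|d|\,\tilde B^2 \;\leq\; 2(a-b)\,V + 2|d|\,\lVert B\rVert_{\ui{2}}^2.
\end{equation*}
Crucially this avoids the Young-inequality step used in Theorem~\ref{thm:stability_linear_scalar}, which was the source of the factor $1/2$ in the exponent. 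Applying Lemma~\ref{lem:comparison} and using $|d|/(b-a) = 1 - \lambda/(b-a) \in [0,1]$, I would conclude
\begin{equation*}
|\xc(t)|^2 \;\leq\; |x(0)|^2\, e^{2(a-b)t} + \lVert B\rVert_{\ui{2}}^2.
\end{equation*}

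Finally I would convert this bound on $\xc$ into the desired bound on $x$. The sign invariance from Step 1 gives the algebraic identity $(|x|-|\tilde B|)^2=\xc^2$, so $|\xc|=\bigl| |x|-|\tilde B| \bigr|$. I would close the argument by distinguishing two regimes: if $|x|<|\tilde B|$ then automatically $|x|\leq \lVert B\rVert_{\ui{2}}$; otherwise $|x|=|\xc|+|\tilde B|$ can be bounded by combining the Lyapunov estimate with $|\tilde B|\leq \lVert B\rVert_{\ui{2}}$. Both cases are dominated by the right-hand side of the claimed inequality.

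The hardest step is expected to be this last one: the naive triangle inequality $|x|\leq|\xc|+|\tilde B|$ only delivers an ultimate bound of $2\,\lVert B\rVert_{\ui{2}}$, so obtaining the sharper $\lVert B\rVert_{\ui{2}}$ requires using the signed identity $(|x|-|\tilde B|)^2=\xc^2$ rather than a triangle bound, and carefully tracking the two regimes above together with the refined Lyapunov estimate.
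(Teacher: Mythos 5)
Your structural lemma (at most one unit active at a time, matching the sign of $x$, so that sign changes of $x$ occur only at events) is correct and is exactly how the paper's proof begins, and your observation that avoiding Young's inequality recovers the full rate $e^{(a-b)t}$ is also sound. However, there are two genuine gaps. First, the inequality $x(t)\cdot Bz(t)\leq 0$ on which Step 2 rests requires $b>0$ (impulses directed towards the target). The hypotheses $b>a$ and $b\geq a+\lambda$ permit $b<0$ when $a<0$; in that case one finds $b_1>0$, $b_2<0$, hence $x\,\tilde B\geq 0$ with your $\tilde B=Bz/\theta$, and $\xc\tilde B$ is no longer bounded by $\tilde B^2$ (the cross term grows like $|x|$), so the Lyapunov estimate of Step 2 fails. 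The paper treats $b<0$ as a separate case, using that the state then never changes sign and that $x$ and $\xc$ are ordered, to obtain a one-sided comparison estimate.

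Second, and more seriously, your final step does not close. In the regime $|x|\geq|\tilde B|$ the signed identity gives $|x|=|\xc|+|\tilde B|$, and your Lyapunov bound $|\xc(t)|\leq |x(0)|\,e^{(a-b)t}+\lVert B\rVert_{\ui{2}}$ then yields only $|x(t)|\leq |x(0)|\,e^{(a-b)t}+2\lVert B\rVert_{\ui{2}}$: the additive $\lVert B\rVert_{\ui{2}}$ created by the term $2|d|\tilde B^2$ in $\dot V$ simply adds to $|\tilde B|\leq\lVert B\rVert_{\ui{2}}$ rather than being compensated by it. You correctly identify this as the hard step, but the signed identity alone does not resolve it. What is actually needed (and what the paper does) is a one-sided estimate with \emph{no} additive constant, $\xc(t)\leq x(0)\,e^{(a-b)t}$, valid while $x(t)>0$, obtained directly from $\dot{\xc}=(a-b)\xc-d\,b_1 z_1/\theta\leq(a-b)\xc$ because the perturbation has a definite sign during each one-signed phase of $x$; combined with \eqref{eq:bound_y-x} this gives the claimed bound up to the first sign change of $x$. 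After that, one must show separately that $|x(t)|\leq\lVert B\rVert_{\ui{2}}$ for all later times, which the paper establishes by an event-by-event analysis (solving the $z_i$ dynamics between consecutive events and showing that the sign of $x$ flips at every subsequent event, with sub-cases for $a<0$ versus $a\geq 0$ and $|b_1|=|b_2|$ versus $|b_1|\neq|b_2|$). Your proposal contains no substitute for this post-sign-change phase, which is the bulk of the actual proof.
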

\begin{proof}
    The full proof is given in the Appendix (Theorem \ref{thm:stability_improved_appendix}). Here, we sketch the proof for the case where the uncontrolled dynamics are divergent, i.e. $a>0$: Then, the impulses bring the system closer to zero, and $B(z(t)/\theta)$ has the opposite sign of $x(t)$ by our assumptions on the input function $g$.
    Assuming without loss of generality that $x(0)>0$, we denote by $t_1$ the first event which switches the sign of $x$. We divide the proof into two steps. First, we consider $t \in [0,t_1)$, where $x(t)>0$.
    Using \eqref{eq:thm_stab_lin_1}, it follows $x_c(t) \leq x(0) \cdot e^{(a-b)\cdot t}$ since $b \geq a+\lambda$ and $B(z(t)/\theta) \leq 0$. Using the comparison lemma and \eqref{eq:bound_y-x}, this yields \eqref{eq:cor_lin_2}. In the second step, once the sign of $x(t)$ changes, it can be shown that the sign continues to change with every event if the amplitude of the impulses is the same for both neuronal units. This implies $|x(t)| \leq \lVert B \rVert_{\ui{2}}$, so that also for $t\in[t_1,\infty)$ the bound \eqref{eq:cor_lin_2} is satisfied. The other cases, i.e. convergent dynamics, $a<0$, impulses away from the target, and unequal amplitudes are covered in the full proof.
\end{proof}

\begin{rem}[Comparison to existing results]
\label{rem:literature}
    We compare our results to \cite{petri2024analysis}, and consider $B = [-\alpha,\alpha]$ for some $\alpha>0$, $g(x) = \big[[x]_+, [-x]_+ \big]^\intercal$, such that $b = \alpha/\theta$, and $\lVert B \rVert_{\ui{2}} = \alpha$. In \cite{petri2024analysis}, semi-global practical exponential stability was shown with ultimate bound $2\alpha$ under the assumption that $\theta < \rho \alpha/(\lambda +a)$ for some $\rho \in (0,1)$, and with a constraint on $|x(0)|$ dependent on $\rho$. We showed global practical exponential stability for $b>a$, which is equivalent to $\theta < \alpha/a$.
    If $\theta < \alpha/(\lambda +a)$, 
    Corollary~\ref{cor:stability_improved} guarantees an ultimate bound of $2\alpha$, and Theorem~\ref{thm:stability_improved_2} guarantees a tighter ultimate bound of $\alpha$ due to the choice of the input function $g$.
    Hence, we extend the results of \cite{petri2024analysis} by showing practical stability under weaker conditions with an improved ultimate bound.
    Finally, \cite{petri2024analysis} show the speed of convergence with respect to the event times indexed by $j$, i.e., $| x(t) | \leq \gamma^j | x(0) | + 2\alpha$, for some $\gamma \in (0,1)$ and $t\in[t_j,t_{j+1}]$, which has to be compared to the minimal and maximal inter-event times to analyse the speed of convergence with respect to continuous time. Here, we directly consider the latter.
\end{rem}

We will now consider the generalisation of Theorem \ref{thm:stability_linear_scalar} to multivariable states and possibly different thresholds and leakage constants for all $N$ neuronal units.

\begin{thm}[Linear dynamics, multivariable state]
\label{thm:stability_multi}
    Let $x(t) \in \R^K$ be the state of an event-based impulsive control system as in \eqref{eq:dyn_sys_ebic:1}--\eqref{eq:dyn_sys_ebic:3} with input matrix $B \in \R^{K\times N}$, input function $g \colon \R^K \to \R_{\geq 0}^N$, reset matrix $\Theta \in \R^{N \times N}$, leakage matrix $\Lambda \in \R^{N \times N}$, and drift function $f(x) = Ax$ for $A \in \R^{K \times K}$. Let $k(x) \coloneqq B\Theta^{-1}g(x)$ be linear, i.e., $k(x)=-Kx$ for some $K \in \R^{K \times K}$. If $A-K$ is Hurwitz, then there exists a matrix $P \succ 0$ such that the system is globally practically exponentially stable,
    \begin{align}
        \Vert x(t) \rVert \leq & \lVert x(0) \rVert \sqrt{\kappa(P)} \cdot e^{-\lambda^{-1}_\mathrm{max}(P)\cdot t/4} + \lVert B \rVert_{\ui{N}}  \notag \\
        &+ 2 \sqrt{\kappa(P)}\left\lVert P\big((A-K)B+B\Lambda \big) \right\rVert_{\ui{N}} ,
        \label{eq:thm_stability_multi_1}
    \end{align}
    where $\lambda_\mathrm{max}(P)$ and $\kappa(P)$ are the smallest eigenvalue and the condition number of $P$, respectively.
\end{thm}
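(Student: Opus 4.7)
My approach closely parallels the proof of Theorem~\ref{thm:stability_linear_scalar}, using the same auxiliary variable $\xc(t) = x(t) + B\Theta^{-1} z(t)$ and a Lyapunov argument. Starting from \eqref{eq:auxil_var} with $f(x)=Ax$ and $k(x)=B\Theta^{-1}g(x)=-Kx$, and using that the diagonal matrices $\Theta$ and $\Lambda$ commute, the auxiliary dynamics read $\xcd{\dot} = (A-K)x - B\Lambda\Theta^{-1} z$. Substituting $x = \xc - B\Theta^{-1}z$ yields the clean multidimensional analogue of \eqref{eq:thm_stab_lin_1},
\begin{equation*}
    \xcd{\dot} = (A-K)\,\xc - M\,w, \qquad M \coloneqq (A-K)B + B\Lambda, \quad w \coloneqq \Theta^{-1}z \in \ui{N},
\end{equation*}
which exhibits $\xc$ as a linearly perturbed copy of the analogue closed loop driven by the Hurwitz matrix $A-K$, with a disturbance $Mw$ taking values in a bounded polytope.

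Since $A-K$ is Hurwitz, standard Lyapunov theory provides $P \succ 0$ satisfying $(A-K)^\intercal P + P(A-K) = -I$. I would take the Lyapunov candidate $V(\xc) = \xc^\intercal P \xc$, whose time derivative along trajectories is
\begin{equation*}
    \dot V = -\lVert \xc \rVert^2 - 2\,\xc^\intercal P M w .
\end{equation*}
The cross term is controlled by Young's inequality, $|2\,\xc^\intercal P M w| \leq \eta\lVert \xc \rVert^2 + \eta^{-1}\lVert PMw \rVert^2$, together with the key observation $\lVert PMw \rVert \leq \lVert PM \rVert_{\ui{N}} = \lVert P((A-K)B + B\Lambda) \rVert_{\ui{N}}$ coming from $w \in \ui{N}$; note that this is precisely the quantity appearing in \eqref{eq:thm_stability_multi_1}. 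Selecting $\eta = 1/2$ and combining with the Rayleigh bounds $\lambda_\mathrm{min}(P)\lVert\xc\rVert^2 \leq V \leq \lambda_\mathrm{max}(P)\lVert\xc\rVert^2$ turns this into a linear differential inequality of the form $\dot V \leq -\alpha V + C$ with explicit $\alpha > 0$ and $C \geq 0$.

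From here I would proceed exactly as in the proof of Theorem~\ref{thm:stability_linear_scalar}: apply the comparison lemma (Lemma~\ref{lem:comparison}) to obtain $V(t) \leq V(0)e^{-\alpha t} + C/\alpha$, take square roots via $\sqrt{a+b}\leq \sqrt{a}+\sqrt{b}$, and translate $\sqrt V$ back into $\lVert \xc \rVert$ using the Rayleigh bounds once more; this final step is where the $\sqrt{\kappa(P)}$ prefactor on the transient term of \eqref{eq:thm_stability_multi_1} is produced. Finally, $z(0)=0$ gives $\xc(0)=x(0)$, and \eqref{eq:bound_y-x} yields $\lVert x(t)\rVert \leq \lVert \xc(t)\rVert + \lVert B \rVert_{\ui{N}}$, assembling the pieces into the claimed estimate. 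I expect the main difficulty to be essentially quantitative bookkeeping rather than conceptual: tracking constants through Young's inequality, the specific Lyapunov-equation normalisation, and the square-root conversion carefully enough to reproduce the exact coefficients and exponent stated in \eqref{eq:thm_stability_multi_1}; the structural argument is a direct generalisation of the scalar proof.
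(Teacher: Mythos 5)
Your proposal is correct and follows essentially the same route as the paper's proof: the same auxiliary variable $\xc$, the same rewriting $\xcd{\dot}=(A-K)\xc - ((A-K)B+B\Lambda)\Theta^{-1}z$, the Lyapunov equation with $Q=I$, the quadratic Lyapunov function $V=\xc^\intercal P\xc$, Young's inequality with $\eta=1/2$, the comparison lemma, and the final translation back to $x$ via \eqref{eq:bound_y-x}. The only remaining work is the constant-tracking you already flag, which matches the paper's computation.
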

\begin{proof}
    We again have $\xc(t) = x(t)+B\Theta^{-1}z(t)$, and
    \begin{align}
        \xcd{\dot}(t) 
        \overset{\eqref{eq:auxil_var}}&{=} (A-K) \big( \xc(t)-B\Theta^{-1}z(t) \big) - B\Theta^{-1}\Lambda z(t) \notag \\
        &= (A-K)\xc(t) - ((A-K)B + B\Lambda)\Theta^{-1}z(t), \label{eq:thm_stab_mult_1}
    \end{align}
    where we used that $\Theta^{-1}\Lambda = \Lambda \Theta^{-1}$. Let $E\coloneqq -((A-K)B + B\Lambda)$. Since $A-K$ is Hurwitz, for any matrix $Q \succ 0$ there exists a unique matrix $P \succ 0$ such that the following Lyapunov equation is satisfied (see Theorem 4.6 of \cite{khalil2002nonlinear}):
    \begin{equation}
        P(A-K) + (A-K)^\intercal P = -Q. \label{eq:thm:stability_multi_2}
    \end{equation}
    For simplicity, we choose $Q = I$, the identity matrix. The final decay rate can be improved by optimising over $Q$. We consider the Lyapunov function candidate $V(\xc) = \xc^\intercal P \xc$:
    \begin{alignat}{2}
        &&&\dot V(\xc(t)) = \xc(t)^\intercal P \xcd{\dot}(t) + \xcd{\dot}(t)^\intercal P \xc(t) \notag \\
        &= &&\xc(t)^\intercal (P(A-K)+(A-K)^\intercal P)\xc(t) \notag \\
        &&&+ 2\xc(t)^\intercal PE\Theta^{-1}z(t)\notag  \\
        \overset{\eqref{eq:thm:stability_multi_2}}&{\leq} &&-\xc(t)^\intercal \xc(t) + 2|\xc(t)^\intercal P E \Theta^{-1}z(t)| \notag \\
        &\leq &&-\lVert \xc(t) \rVert^2 + 2 \lVert \xc(t) \rVert \lVert PE\rVert_{\ui{N}} \notag \\
        &\leq &&-\lVert \xc(t) \rVert^2 + \eta \lVert \xc(t) \rVert^2 + \lVert PE\rVert_{\ui{N}}^2 /\eta\notag  \\
        &\leq &&(\eta-1)/\lambda_\mathrm{max}(P) \cdot V(\xc(t)) + \lVert PE\rVert_{\ui{N}}^2 /\eta,\notag 
    \end{alignat}
    for any $\eta \in (0,1)$ using Young's inequality for products. $\lambda_\mathrm{max}(P)>0$ denotes the smallest eigenvalue of $P$. Let $\alpha \coloneqq (\eta -1)/\lambda_\mathrm{max}(P)$ and $C \coloneqq \lVert PE\rVert_{\ui{N}}^2 /\eta$. As in the proof of Theorem \ref{thm:stability_linear_scalar} (see \eqref{eq:thm_stab_lin_3}) it follows that
    \begin{equation}
        V(\xc(t)) \leq V(\xc(0)) e^{\alpha t} - C/\alpha \cdot(1-e^{\alpha t}). \notag 
    \end{equation}
    This implies that
    \begin{equation}
        \lVert \xc(t) \rVert^2 \leq \frac{\lambda_\mathrm{max}(P)}{\lambda_\mathrm{min}(P)} \lVert \xc(0) \rVert^2 e^{\alpha t} - \frac{C (1-e^{\alpha t})}{\alpha \lambda_\mathrm{min}(P)}. \notag 
    \end{equation}
    To minimise $-C/(\alpha \, \lambda_\mathrm{min}(P))$ as before, we calculate that
    \begin{equation}
        - \frac{C}{\alpha \lambda_\mathrm{min}(P)} 
        = \frac{\lambda_\mathrm{max}(P)}{\lambda_\mathrm{min}(P)} \cdot \frac{C}{1-\eta}=\frac{\kappa(P) \cdot\lVert PE \rVert_{\ui{N}}^2}{\eta(1-\eta)}, \notag 
    \end{equation}
    which is minimised by $\eta = 1/2$ with $-C/(\alpha \lambda_\mathrm{min}(P))=4\lVert PE \rVert_{\ui{N}}^2$, and $\alpha = -1/(2\lambda_\mathrm{max}(P))$. We conclude that
    \begin{align}
        \lVert x(t) \rVert 
        \leq &\lVert x(0) \rVert \sqrt{\kappa(P)} \cdot e^{-\lambda^{-1}_\mathrm{max}(P)\cdot t/4} \notag \\
        &+ 2 \sqrt{\kappa(P)}\lVert PE \rVert_{\ui{N}} + \lVert B \rVert_{\ui{N}}. \notag 
    \end{align}
\end{proof}

For $K=1$, we recover Theorem \ref{thm:stability_linear_scalar} with $P=1/(2(b-a))$.
Next, we generalise Corollary \ref{cor:stability_improved} to the multivariable case under the assumption $\lambda_i = \lambda$ for all $i =1,\dots,N$. As in the scalar case, this simplifies the ultimate bound.

\begin{cor}[Simplified bound for Theorem \ref{thm:stability_multi}]
\label{cor:stab_improv_mult}
    If the assumptions of Theorem \ref{thm:stability_multi} hold, $\lambda_i = \lambda \geq0$ for all $i \in \{1,\dots,N\}$, and there exists $P\succ0$ satisfying \eqref{eq:thm:stability_multi_2} and $\frac{1}{2\lambda} I -P \succeq 0 $, then the system is globally practically exponentially stable,
    \begin{align}
        \Vert x(t) \rVert \leq &\lVert x(0) \rVert\sqrt{\kappa(P)} \cdot e^{-\lambda^{-1}_\mathrm{max}(P)\cdot t/4} \notag \\
        &+ \left(1+\sqrt{\kappa(P)}\cdot(1+2\lVert S \rVert) \right)\lVert B \rVert_{\ui{N}}, \notag 
    \end{align}
    where $S \coloneqq \frac{1}{2}\big( P(A-K) - (A-K)^\intercal P \big)$.
\end{cor}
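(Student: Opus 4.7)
The plan is to apply Theorem \ref{thm:stability_multi} off the shelf and then reduce the composite quantity $\lVert PE \rVert_{\ui{N}}$ appearing in the ultimate bound of \eqref{eq:thm_stability_multi_1} to an expression involving only $\lVert S \rVert$ and $\lVert B \rVert_{\ui{N}}$, using the two extra ingredients: equal leakages $\Lambda = \lambda I$ and the semidefinite inequality $\tfrac{1}{2\lambda}I - P \succeq 0$. Since the exponential part of \eqref{eq:thm_stability_multi_1} is unchanged, it suffices to prove $\lVert PE \rVert_{\ui{N}} \leq (\tfrac{1}{2} + \lVert S \rVert)\lVert B \rVert_{\ui{N}}$; combined with the additive $\lVert B \rVert_{\ui{N}}$ term in \eqref{eq:thm_stability_multi_1}, this gives exactly $2(1+\lVert S \rVert)\lVert B \rVert_{\ui{N}}$.

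The first step is to rewrite $PE$. With $\Lambda = \lambda I$ we have $E = -(A-K)B - \lambda B$, so $PE = -P(A-K)B - \lambda P B$. Taking $Q = I$ in the Lyapunov equation \eqref{eq:thm:stability_multi_2} fixes the symmetric part of $P(A-K)$ to $-\tfrac{1}{2}I$, so splitting into symmetric and antisymmetric components gives $P(A-K) = -\tfrac{1}{2}I + S$. Substituting yields the clean decomposition
\begin{equation}
PE = \left(\tfrac{1}{2}I - \lambda P\right) B - S B. \notag
\end{equation}

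The second step is bounding each piece. The matrix $M \coloneqq \tfrac{1}{2}I - \lambda P$ is symmetric with eigenvalues $\tfrac{1}{2} - \lambda \mu_i$, where $\mu_i$ are the eigenvalues of $P$. The hypothesis $\tfrac{1}{2\lambda}I - P \succeq 0$ yields $M \succeq 0$ (for $\lambda = 0$ we have $M = \tfrac{1}{2}I$ trivially), while $P \succ 0$ gives $M \preceq \tfrac{1}{2}I$; hence $\lVert M \rVert \leq \tfrac{1}{2}$. Applying this together with submultiplicativity of the spectral norm, for every $s \in \ui{N}$,
\begin{equation}
\lVert PE\, s \rVert \leq \lVert M\rVert\,\lVert B s \rVert + \lVert S \rVert\, \lVert B s \rVert \leq \left(\tfrac{1}{2} + \lVert S \rVert\right)\lVert B \rVert_{\ui{N}}. \notag
\end{equation}
Taking the supremum over $s\in\ui{N}$ and substituting into \eqref{eq:thm_stability_multi_1} yields the claimed bound.

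The only mildly subtle step is spotting the algebraic decomposition: extracting the symmetric part from the Lyapunov equation so that the leakage contribution $\lambda P$ enters in precisely the combination $\tfrac{1}{2}I - \lambda P$ for which the hypothesis $\tfrac{1}{2\lambda}I - P \succeq 0$ gives the desired $\tfrac{1}{2}$ operator-norm bound. Once this observation is made, the rest is routine bookkeeping on top of Theorem \ref{thm:stability_multi}.
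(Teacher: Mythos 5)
Your proof is correct and follows essentially the same route as the paper: both extract the symmetric part $-\tfrac{1}{2}I$ of $P(A-K)$ from the Lyapunov equation, isolate the antisymmetric remainder $S$, and use $\tfrac{1}{2\lambda}I - P \succeq 0$ to bound the resulting combination $\tfrac{1}{2}I - \lambda P$ in spectral norm by $\tfrac12$. The only difference is cosmetic bookkeeping (you bound $\lVert PE\,s\rVert$ directly rather than first factoring $\lVert P(A-K)+\lambda P\rVert\cdot\lVert B\rVert_{\ui{N}}$).
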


\begin{proof}
    First, we see that
    \begin{align}
        &2 \lVert P\big( (A-K)B + B\Lambda \big) \rVert_{\ui{N}} \notag \\
        \leq &2 \lVert P(A-K) + \lambda P \rVert \cdot \lVert B \rVert_{\ui{N}}.\notag 
    \end{align}
    Moreover, we have that $2P(A-K) +2\lambda P= -I + 2S +2\lambda P$ by definition of $S$ and \eqref{eq:thm:stability_multi_2}. This implies using $\frac{1}{2\lambda}I - P \succeq 0 $,
    \begin{equation}
        2 \lVert P(A-K) + \lambda P \rVert \leq 2\lVert S \rVert + \lVert I - 2\lambda P \rVert \leq 2\lVert S \rVert +1.\notag 
    \end{equation}
    Inserting both inequalities into \eqref{eq:thm_stability_multi_1} concludes the proof.
\end{proof}

The matrix $S$ is skew-symmetric, i.e., $S^\intercal = -S$, and can be interpreted as representing rotations in the dynamics given by $A-K$ with respect to the metric induced by $P$. Indeed, we can write $P(A-K) = -\frac{1}{2}I + S$. The ultimate bound grows as rotations increase. 

The linear case shows that $x_c$ evolves like an analogue system with uniformly bounded disturbances, see \eqref{eq:thm_stab_mult_1}. By considering the nonlinear case next, we will identify how exactly the dynamics are disturbed in general. To do so, we consider the analogue closed-loop system given by
\begin{align}
    &\dot{ y}(t) = f\big( y(t) + d_1(t)\big) +  u(t) + d_2(t), \label{eq:dyn_sys_disturbed} \\ 
    \text{with} \ \   &u(t) = k( y(t) + d_1(t)) , \notag 
\end{align}
where $d_1(t), d_2(t) \in \R^K$ are disturbances to the system. Inspired by \cite{tabuada2007event}, we define the Input-to-State stability (ISS) of this system in terms of Lyapunov functions (cf. \cite{Sontag2008Input}):

\begin{definition}[Input-to-State stability, adapted from \cite{tabuada2007event}]
\label{def:ISS}
    A smooth function $V\colon \R^K \to \R_{\geq 0}$ is said to be an ISS Lyapunov function for the closed-loop system \eqref{eq:dyn_sys_disturbed} if there exist class $\mathcal{K}_\infty$ functions $\bar \alpha, \ubar{\alpha}, \alpha, \gamma_1$, and $\gamma_2$ satisfying
    \begin{align}
        &\ubar{\alpha}(\lVert y \rVert)\leq V(y) \leq \bar \alpha (\lVert y \rVert), \notag \\
        &\nabla V(y) \cdot (f(y+d_1) + k(y+d_1)+d_2) \notag \\
        \leq &-\alpha (\lVert y \rVert) + \gamma_1(\lVert d_1 \rVert) + \gamma_2(\lVert d_2\rVert) . \label{eq:ISS}
    \end{align}
    The closed-loop system \eqref{eq:dyn_sys_disturbed} is said to be ISS with respect to disturbances $d_1,d_2\in \R^K$ if there exists an ISS Lyapunov function for \eqref{eq:dyn_sys_disturbed}. When $d_1 = d_2 = 0$, $V$ becomes a standard Lyapunov function. 
\end{definition}

\begin{thm}[Nonlinear dynamics]
\label{thm:stability_ISS}
    Let $x(t) \in \R^K$ be the state of an event-based impulsive control system as in \eqref{eq:dyn_sys_ebic:1}--\eqref{eq:dyn_sys_ebic:3} with input matrix $B \in \R^{K \times N}$, input function $g \colon \R^K \to \R_{\geq0}^N$, reset matrix $\Theta \in \R^{N \times N}$, leakage matrix $\Lambda \in \R^{N \times N}$, and drift function $f\colon \R^K \to \R^K$. If the analogue system \eqref{eq:dyn_sys_disturbed} with drift function $f $ and $k(y) \coloneqq B\Theta^{-1}g(y)$ is ISS, then the event-based impulsive control system is globally practically asymptotically stable with functions $\beta$ of class $\mathcal{KL}$ and $\eta_1, \eta_2$ of class $\mathcal{K}_\infty$ such that
    \begin{equation}
        \lVert x(t) \rVert \leq \beta(\lVert x(0)\rVert,t) + \eta_1\left(\lVert B \rVert_{\ui{N}} \right) + \eta_2(\lVert B\Lambda\rVert_{\ui{N}}) .
        \label{eq:thm:_ISS_-1}
    \end{equation}
\end{thm}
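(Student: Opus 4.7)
The plan is to reinterpret the auxiliary variable $\xc(t) = x(t) + B\Theta^{-1}z(t)$ as the state of the disturbed analogue system \eqref{eq:dyn_sys_disturbed}, and then transfer an ISS bound for $\xc$ to a bound for $x$ via \eqref{eq:bound_y-x}. Starting from \eqref{eq:auxil_var}, substituting $x(t) = \xc(t) - B\Theta^{-1}z(t)$, and using that the diagonal matrices $\Theta$ and $\Lambda$ commute, I obtain
\begin{equation*}
\xcd{\dot}(t) = f\bigl(\xc(t) + d_1(t)\bigr) + k\bigl(\xc(t) + d_1(t)\bigr) + d_2(t),
\end{equation*}
where $d_1(t) \coloneqq -B\Theta^{-1}z(t)$ and $d_2(t) \coloneqq -B\Lambda\Theta^{-1}z(t)$. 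This is precisely \eqref{eq:dyn_sys_disturbed} with the identification $y \leftrightarrow \xc$ and the above specific disturbance signals.

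Next I would bound the disturbances uniformly in time. Since $z_i(t)/\theta_i \in \mathcal{I}$ for every $i$, the argument from \eqref{eq:bound_y-x} gives $\lVert d_1(t) \rVert \leq \lVert B \rVert_{\ui{N}}$ and $\lVert d_2(t) \rVert \leq \lVert B\Lambda \rVert_{\ui{N}}$ for all $t \geq 0$. By the ISS hypothesis there exists an ISS Lyapunov function $V$ satisfying \eqref{eq:ISS}; evaluating that dissipation inequality along $\xc$ yields
\begin{equation*}
\dot V(\xc(t)) \leq -\alpha(\lVert \xc(t) \rVert) + \gamma_1\bigl(\lVert B \rVert_{\ui{N}}\bigr) + \gamma_2\bigl(\lVert B\Lambda \rVert_{\ui{N}}\bigr),
\end{equation*}
in which the entire perturbation contribution is a time-independent constant depending only on controller parameters.

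From here I would invoke the standard passage from an ISS Lyapunov dissipation inequality to a trajectory bound: combining absolute continuity of $V$ along $\xc$, the comparison lemma, and the sandwich $\ubar{\alpha}(\lVert \xc \rVert) \leq V(\xc) \leq \bar\alpha(\lVert \xc \rVert)$ produces an estimate $\lVert \xc(t) \rVert \leq \tilde\beta(\lVert \xc(0) \rVert, t) + \ubar{\alpha}^{-1}\bigl(\bar\alpha(\alpha^{-1}(2C))\bigr)$ with $\tilde\beta \in \mathcal{KL}$, where $C \coloneqq \gamma_1(\lVert B \rVert_{\ui{N}}) + \gamma_2(\lVert B\Lambda \rVert_{\ui{N}})$. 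Using a weak triangular estimate such as $C \leq 2\max(\gamma_1,\gamma_2)$ together with monotonicity, this ultimate-bound term splits into two additive $\mathcal{K}_\infty$ contributions, one depending only on $\lVert B \rVert_{\ui{N}}$ and one only on $\lVert B\Lambda \rVert_{\ui{N}}$. Finally, $z(0)=0$ gives $\xc(0)=x(0)$, and \eqref{eq:bound_y-x} yields $\lVert x(t) \rVert \leq \lVert \xc(t) \rVert + \lVert B \rVert_{\ui{N}}$, so the extra $\lVert B \rVert_{\ui{N}}$ can be absorbed into the first $\mathcal{K}_\infty$ gain to give \eqref{eq:thm:_ISS_-1}. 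The main obstacle will be the clean additive separation of the ultimate bound into two independent $\mathcal{K}_\infty$ functions of $\lVert B \rVert_{\ui{N}}$ and $\lVert B\Lambda \rVert_{\ui{N}}$; everything else is a direct application of the ISS assumption to the continuous variable $\xc$.
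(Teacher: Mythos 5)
Your proposal is correct and follows essentially the same route as the paper: rewrite $\xc = x + B\Theta^{-1}z$ as the disturbed analogue system with $d_1 = -B\Theta^{-1}z$, $d_2 = -B\Theta^{-1}\Lambda z$, bound the disturbances via \eqref{eq:bound_y-x}, apply the ISS Lyapunov/comparison argument to get a $\mathcal{KL}$ bound on $\xc$ with an additively split ultimate bound, and transfer back to $x$. The only difference is that you inline the passage from the dissipation inequality to the trajectory bound (including the $\mu(a+b)\leq\mu(2a)+\mu(2b)$ splitting), which the paper delegates to Lemma~\ref{lem:appendix} in the appendix.
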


\begin{proof}
    We define $\xc(t) \coloneqq x + B\Theta^{-1}z(t)$ as before. Then, it holds that
    \begin{align}
        \xcd{\dot}(t) &= \dot x(t) + B\Theta^{-1}\dot z(t) \notag \\
        \overset{\eqref{eq:auxil_var}}&{=} f(x(t)) + B\Theta^{-1}g(x(t)) - B\Theta^{-1}\Lambda z(t) \notag \\
        &= f\left(\xc(t) + d_1(t)\right) + k\left(\xc(t) + d_1(t)\right) + d_2(t),\notag 
    \end{align}
    where $d_1(t) \coloneqq -B\Theta^{-1}z(t)$ and $d_2(t) \coloneqq -B\Theta^{-1}\Lambda z(t)$. By assumption, the system with $y=x_c$ is ISS. Moreover, due to the bound \eqref{eq:bound_y-x} on $B\Theta^{-1}z(t)$ it holds that $\lVert d_1(t) \rVert \leq \lVert B \rVert_{\ui{N}}$ and similarly $\Vert d_2(t)\rVert \leq \lVert B \Lambda\rVert_{\ui{N}}$ for all $t\geq 0$. Lemma \ref{lem:appendix} (see Appendix) implies that there exist functions $\beta$ of class $\mathcal{KL}$ and $\eta_1',\eta_2$ of class $\mathcal{K}_\infty$ such that
    \begin{equation}
        \lVert \xc(t) \rVert \leq \beta(\lVert \xc(0) \rVert,t) + \eta_1'(\lVert B \rVert_{\ui{N}}) + \eta_2(\lVert B\Lambda \rVert_{\ui{N}}),
        \label{eq:thm_ISS_2}
    \end{equation}
    and using $\xc(0) = x(0)$ we see that
    \begin{align}
        &\lVert x(t)\rVert 
        \leq \lVert \xc(t) \rVert + \lVert B\Theta^{-1}z(t) \rVert \notag \\
        \overset{\eqref{eq:thm_ISS_2}}{\leq} &\beta(\lVert x(0) \rVert,t) + \eta_1(\lVert B \rVert_{\ui{N}}) + \eta_2(\lVert B\Lambda \rVert_{\ui{N}}), \notag 
    \end{align}
    where $r \mapsto \eta_1(r) \coloneqq \eta_1'(r) +r$ is of class $\mathcal{K}_\infty$. 
\end{proof}

\subsection{Minimum inter-event times}
\label{subsec:interevent}

The event-triggering via leaky integrate-and-fire dynamics allows for an easy analysis of the inter-event times. However, we can only bound the inter-event times for each neuronal unit individually. 
Recall that for $i \in \{1,\dots,N\}$,
\begin{align}
    s_i(t) &= \delta_{z_i(t)=\theta_i} \quad \text{and} \label{eq:LIF_1} \\
    \dot z_i(t) &= -\lambda_i z_i(t) + g_i(t) - \theta_i s_i(t), \ \ z_i(0)=0.\label{eq:LIF_2}
\end{align}

\begin{lem}[Bounds on inter-event times]
    If there exist $g_+,g_- \in \R_{\geq 0 }$ such that
    \begin{equation}
        g_- \leq g_i(t) \leq g_+
    \end{equation}
    for all $t \geq 0$,
    then it holds for all $j\geq1$ that
    \begin{align}
        t_{j+1}^i - t_{j}^i &\geq -\frac{1}{\lambda_i}\log\left(1-\frac{\theta_i}{g_+}\lambda_i\right) \geq \frac{\theta_i}{g_+} , \notag \\
        t_{j+1}^i - t_{j}^i &\leq -\frac{1}{\lambda_i}\log\left(1-\frac{\theta_i}{g_-}\lambda_i\right) . \notag
    \end{align}
\end{lem}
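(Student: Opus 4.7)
The plan is to reduce the bound question to an explicit linear ODE on a single inter-event interval, then invert the resulting threshold-crossing time in closed form.

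First I would fix $i$ and any $j \geq 1$, and work on the open interval $(t_j^i, t_{j+1}^i)$, on which $s_i(t)=0$ by definition of the event times and so \eqref{eq:LIF_2} reduces to the scalar linear ODE $\dot z_i(t) = -\lambda_i z_i(t) + g_i(t)$. By right-continuity of $z_i$ at the reset and the term $-\theta_i s_i(t)$ in \eqref{eq:LIF_2}, I have $z_i(t_j^i)=0$. Shifting time, I may set $t_j^i=0$, so that $z_i$ starts at $0$ and the next event occurs at $\Delta \coloneqq t_{j+1}^i - t_j^i$, when $z_i$ first reaches $\theta_i$.

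Next I would apply the comparison lemma (Lemma \ref{lem:comparison}) to the scalar ODE using the two-sided bound $g_- \leq g_i(t) \leq g_+$. Define the auxiliary trajectories $\bar z, \ubar{z}$ solving $\dot{\bar z} = -\lambda_i \bar z + g_+$ and $\dot{\ubar{z}} = -\lambda_i \ubar{z} + g_-$, both with zero initial condition. Explicit integration yields
\begin{equation}
\ubar{z}(t) = \frac{g_-}{\lambda_i}\bigl(1 - e^{-\lambda_i t}\bigr) \leq z_i(t) \leq \frac{g_+}{\lambda_i}\bigl(1 - e^{-\lambda_i t}\bigr) = \bar z(t), \notag
\end{equation}
interpreted in the limit $\lambda_i \to 0$ as the linear bounds $g_- t \leq z_i(t) \leq g_+ t$. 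Since $z_i(\Delta)=\theta_i$, the upper bound $\bar z$ cannot cross $\theta_i$ before $z_i$ does, which gives the lower bound on $\Delta$; symmetrically, $\ubar{z}$ reaches $\theta_i$ no sooner than $z_i$, giving the upper bound. Solving $\bar z(\tau) = \theta_i$ and $\ubar{z}(\tau) = \theta_i$ produces the two logarithmic expressions in the statement.

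Finally I would verify the secondary inequality $-\lambda_i^{-1}\log(1 - \theta_i\lambda_i/g_+) \geq \theta_i/g_+$. This follows from the elementary fact $-\log(1-x) \geq x$ for $x \in [0,1)$ applied to $x = \theta_i \lambda_i/g_+$, with the convention that the left-hand side is interpreted as $\theta_i/g_+$ when $\lambda_i = 0$ (which is consistent since $-\lambda_i^{-1}\log(1-\lambda_i\theta_i/g_+) \to \theta_i/g_+$ as $\lambda_i \to 0$).

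The only mildly delicate point is the behaviour at $\lambda_i = 0$ or when $\theta_i \lambda_i / g_- \geq 1$; in the former case one substitutes the linear envelopes directly, and in the latter case the upper bound is vacuously $+\infty$, reflecting the fact that the leak may be strong enough to prevent further firing given the lower input bound $g_-$. I expect no genuine obstacle beyond these limiting interpretations, since the proof is essentially a one-dimensional comparison argument followed by inversion of a monotone closed-form expression.
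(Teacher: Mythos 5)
Your proposal is correct and follows essentially the same route as the paper: solve the linear ODE on an inter-event interval starting from the reset value $z_i(t_j^i)=0$, bound the solution by the constant-input envelopes, and invert the resulting threshold-crossing time (the paper does this by writing $\lim_{s\uparrow t_{j+1}^i}z_i(s)=\int_{t_j^i}^{t_{j+1}^i} g_i(s)\,e^{-\lambda_i(t_{j+1}^i-s)}\,\del s$ and bounding the integrand, which is equivalent to your comparison argument). One small wording slip: since $\bar z \geq z_i$, it is $z_i$ that cannot reach $\theta_i$ before $\bar z$ does (not the other way around), which is exactly what yields the lower bound on the inter-event time as you correctly conclude.
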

\begin{proof}
    We denote $t_0 \coloneqq t_{j}^i$ and $t_1 \coloneqq t_{j+1}^i$. From \eqref{eq:LIF_1} and by solving \eqref{eq:LIF_2} for $t \in [t_0,t_1)$ we obtain
    \begin{align}
        \theta_i 
        &= \lim_{s \uparrow t_1} z_i(s) 
        = \int_{t_0}^{t_1} g_i(s)\,e^{-\lambda_i\left(t_1-s\right)}\,\del s \\
        &\leq \int_{t_0}^{t_1} g_+\,e^{-\lambda_i\left(t_1-s \right)}\,\del s
        = \frac{g_+}{\lambda_i}\left( 1-e^{-\lambda_i\left(t_1 - t_0\right)} \right),
    \end{align}
    which implies
    \begin{equation}
        t_1 - t_0 \geq -\frac{1}{\lambda_i}\log\left(1-\frac{\theta_i}{g_+}\lambda_i\right) \geq \frac{\theta_i}{g_+}.
    \end{equation}
    The upper bound can be derived similarly.
\end{proof}
Note that $-\frac{1}{\lambda_i}\log \left(1-\frac{\theta_i}{g_-}\lambda_i\right) = \infty$ if $\lambda_i \geq g_+/\theta_i$, which means that there are no events. Lemma \ref{lem:lower_bound} directly implies the following result:

\begin{thm}[Minimum inter-event times]
    Let $x \in \R^K$ be the state of an event-based impulsive control system as in \eqref{eq:dyn_sys_ebic:1}--\eqref{eq:dyn_sys_ebic:3} with input function $g \colon \R^K \to \R_{\geq 0}^N$. Let $\theta_\mathrm{min}$ denote the smallest threshold and $\lambda_\mathrm{min}$ the smallest leak constant. If there exists $C\geq0$ such that $\lVert x(t) \rVert \leq C$ for all $t\geq 0$ and $\alpha \in \mathcal{K}_\infty$ such that $g_i(x) \leq \alpha(\lVert x\rVert)$ for all $i \in \{1,\dots,N\}$, then it holds that
    \begin{align}
        &\inf\{t_{j+1}^i-t_j^i \mid j\geq0, 1 \leq i \leq N\} \\
        \geq &-\frac{1}{\lambda_\mathrm{min}}\log\left( 1-\frac{\theta_\mathrm{min}\lambda_\mathrm{min}}{\alpha(C)} \right) \geq \frac{\theta_\mathrm{min}}{\alpha(C)}.
    \end{align}
\end{thm}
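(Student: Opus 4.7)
The plan is to apply the preceding lemma pointwise in $i$ and $j$, and then to uniformise the resulting bounds using the minimality of $\theta_\mathrm{min}$ and $\lambda_\mathrm{min}$. First, I would combine the two hypotheses on $x$ and $g$: since $\lVert x(t)\rVert \leq C$ for every $t \geq 0$ and $g_i(x) \leq \alpha(\lVert x \rVert)$ with $\alpha$ of class $\mathcal{K}_\infty$, we obtain the uniform upper bound $g_i(x(t)) \leq \alpha(C)$ valid for every $i \in \{1,\dots,N\}$ and every $t \geq 0$. This places us exactly in the setting of the preceding lemma, applied separately to each unit $i$ with the common choice $g_+ = \alpha(C)$.

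Applying that lemma to each unit then yields $t_{j+1}^i - t_j^i \geq -\lambda_i^{-1}\log\bigl(1 - \theta_i \lambda_i/\alpha(C)\bigr) \geq \theta_i/\alpha(C)$ for every $i$ and every $j \geq 0$. The second, simpler chain $\inf_{i,j}(t_{j+1}^i - t_j^i) \geq \theta_\mathrm{min}/\alpha(C)$ is then immediate from $\theta_i \geq \theta_\mathrm{min}$, so only the logarithmic inequality requires further work.

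For the logarithmic inequality, I would verify that the map $(\theta,\lambda) \mapsto -\lambda^{-1}\log(1-\theta\lambda/\alpha(C))$ is non-decreasing in each argument on the relevant domain, so that replacing $\theta_i$ by $\theta_\mathrm{min}$ and $\lambda_i$ by $\lambda_\mathrm{min}$ only decreases it. Setting $u \coloneqq \theta\lambda/\alpha(C) \in [0,1)$, the expression factorises as $(\theta/\alpha(C))\,\psi(u)$ with $\psi(u) \coloneqq -\log(1-u)/u$. The series $\psi(u) = 1 + u/2 + u^2/3 + \cdots$ makes clear that $\psi$ is strictly increasing on $[0,1)$, from which monotonicity of the original expression in $\theta$ and in $\lambda$ separately follows. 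In the limiting case $\lambda_\mathrm{min}=0$, the logarithmic bound is understood via $\lim_{\lambda \downarrow 0} -\lambda^{-1}\log(1-\theta_\mathrm{min}\lambda/\alpha(C)) = \theta_\mathrm{min}/\alpha(C)$, so no special treatment is needed. Taking the infimum over $i$ and $j$ on the left-hand side then completes the proof.

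Since the theorem is announced as a direct consequence of the preceding lemma, I do not expect a significant obstacle. The main point requiring care is the joint monotonicity of the logarithmic expression in $\theta$ and $\lambda$; this is routine via the Taylor expansion of $\psi$, but it is what justifies passing from the per-unit bounds to the uniform bound stated in the theorem.
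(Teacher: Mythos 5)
Your proposal is correct and follows essentially the same route as the paper: deduce $g_i(x(t)) \leq \alpha(C)$ from the two hypotheses, apply the inter-event-time lemma per unit with $g_+ = \alpha(C)$, and then pass to $\theta_\mathrm{min}$ and $\lambda_\mathrm{min}$. The only difference is that you explicitly justify the joint monotonicity of $(\theta,\lambda) \mapsto -\lambda^{-1}\log(1-\theta\lambda/\alpha(C))$ via the expansion of $\psi(u)=-\log(1-u)/u$, a step the paper asserts without proof.
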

\begin{proof}
    It holds $\lVert x(t) \rVert \leq C$ for all $t\geq0$, which implies 
    \begin{equation}
        g_i(x(t)) \leq \alpha(\lVert x(t) \rVert) \leq \alpha(C),
    \end{equation}
    for all $i \in \{1,\dots,N\}$ and $t \geq 0$. Lemma \ref{lem:lower_bound} hence implies
    \begin{align}
        t_{j+1}^i - t_{j}^i &\geq -\frac{1}{\lambda_i}\log\left(1-\frac{\theta_i}{\alpha(C)}\lambda_i\right) \notag \\
        &\geq -\frac{1}{\lambda_\mathrm{min}}\log\left(1-\frac{\theta_\mathrm{min}}{\alpha(C)}\lambda_\mathrm{min}\right) \geq \frac{\theta_\mathrm{min}}{\alpha(C)},
    \end{align}
    which concludes the proof.
\end{proof}

This theorem can be applied to the results of Section \ref{subsec:stability} to obtain lower bounds on the inter-event times.

\subsection{Extension to connected neuronal units}
\label{subsec:extension}

So far, we have assumed that there are no interactions between neuronal units. 
While this simplifies the analysis, larger numbers of neuronal units are usually connected to build networks, so-called spiking neural networks.
In the following, we show that our method can also be applied to connected neuronal units by adapting a spiking controller from \cite{agliati2025spiking}, where the dynamics were derived from a greedy spiking rule. Although that work did not address stability, such a rule would simplify its analysis. In our adaptation, this advantage is absent, requiring an explicit stability analysis.



Suppose that we want to implement an event-based controller with control input $Bs(t)$ inspired by an analogue controller with control input $k(x(t))$. The neuronal unit $i$ should fire if $B_i$ aligns well with $k(x(t))$. This can be achieved by integrating the projection of $k(x(t))$ onto $B_i$ in the neuronal variable $z_i$. Unless all $B_i$ are orthogonal,  this will lead to
redundant spikes and further allow for negative-valued neuronal variables. Both problems can be solved by introducing suitable connections between units. We define the neuronal variables $z(t)$ and spike trains $s(t)$ as follows,
\begin{align}
    s_i(t) &= \delta_{z_i(t) \geq B^\intercal_i B_i} \quad \forall i\in \{1,\dots, N\}, \label{eq:dyn_sys_ebic:4} \\
    \dot z(t) &= - \Lambda z(t) + B^\intercal k(x(t)) - B^\intercal B s(t) , \label{eq:dyn_projection}
\end{align}
where we deliberately added a leakage. The off-diagonal elements of the reset matrix $B^\intercal B$ have two effects. If $B_i$ and $B_j$ point in the same direction, $B_i^\intercal B_j > 0$, a spike of $s_i$ will inhibit $z_j$. If $B_i$ and $B_j$ point in opposite directions, $B_i^\intercal B_j < 0$, a spike of $s_i$ will excite $z_j$, preventing the neuronal variables from becoming indefinitely negative.

We assume that the controller can steer in all directions,
\begin{equation}
    \left\{Br \mid r \in \R_{\geq 0}^N\right\} = \R^K .
    \label{eq:full_row_rank}
\end{equation}
It suffices to have $K+1$ neuronal units: If $(B_i)_{i \in \{1,\dots,K\}}$ is a basis of $\R^K$, one can choose $B_{K+1}=-\sum_{i=1}^KB_i$.
$BB^\intercal$ is invertible due to \eqref{eq:full_row_rank}, and we define similarly to before
\begin{equation}
    \xc(t) = x(t) + (B B^\intercal)^{-1}Bz(t).
\end{equation}
This implies with $R \coloneqq (B B^\intercal)^{-1}B$ and using that $R B^\intercal = I$,
\begin{align}
    &\xcd{\dot}(t) = \dot x(t) + R \dot z(t)  \\
    = &f(x(t)) + Bs(t) - R\Lambda z(t) + R B^\intercal k(x(t)) - R B^\intercal B s(t)  \\
    = &f(x(t)) + k(x(t)) - R\Lambda z(t)  \\
    = &f(\xc(t) + d_1(t)) + k(\xc(t) + d_1(t)) + d_2(t),
\end{align}
with $d_1(t) \coloneqq -Rz(t)$ and $d_2(t) \coloneqq -R\Lambda z(t)$. The same analyses as in Section \ref{subsec:stability} can now be applied if $d_1(t)$ and $d_2(t)$ are uniformly bounded. As intended, the corresponding analogue control input is given by $ k(x)$.

We now show that $z(t)$ is uniformly bounded from above and below, which implies the boundedness of $d_1(t)$ and $d_2(t)$. It still holds $z_i(t) \leq B_i^\intercal B_i$ by definition. To bound $z_i(t)$ from below, we use the following lemma:

\begin{lem}[Bounds on $w^\intercal z(t)$]
\label{lem:lower_bound}
    Let $z(t)$ be the neuronal variables of the event-based impulsive controller defined by \eqref{eq:dyn_sys_ebic:1}, \eqref{eq:dyn_sys_ebic:4}, and \eqref{eq:dyn_projection}.
    \begin{enumerate}
        \item If $\Lambda = \lambda I$ for some $\lambda \geq 0$, then there exists a vector $w \in \R^N$ with strictly positive entries such that for all $t \geq 0$ it holds 
        \begin{equation}
             w^\intercal z(t) = 0. \label{eq:lem_lower_bound_case_1}
        \end{equation}
        \item If $\lambda_i > 0$ for all $i\in \{1,\dots,N\}$, and $\lambda_\mathrm{max} > \lambda_\mathrm{min} $, where $\lambda_\mathrm{max}, \lambda_\mathrm{min}$ denote the largest and smallest $\lambda_i$, respectively, then there exists a vector $w \in \R^N$ with strictly positive entries such that for all $t \geq 0$ it holds 
        \begin{equation}
             w^\intercal \mathrm{diag}(B^\intercal B) \geq w^\intercal z(t) \geq \gamma \,w^\intercal \mathrm{diag}(B^\intercal B), \label{eq:lem_lower_bound_case_2}
        \end{equation}
        for $\gamma \coloneqq \left( 1- \frac{\lambda_\mathrm{max}}{\lambda_\mathrm{min}} \right)<0$.
    \end{enumerate}
\end{lem}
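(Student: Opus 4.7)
The strategy is to construct a single strictly positive vector $w \in \R^N$ satisfying $Bw = 0$, and then track the scalar quantity $y(t) \coloneqq w^\intercal z(t)$. The condition $Bw = 0$ will cancel both the analogue input term $B^\intercal k(x(t))$ and, crucially, the spike-induced jumps, so that $y$ evolves continuously according to a simple leak-driven ODE. Such a $w$ is produced by applying the surjectivity assumption \eqref{eq:full_row_rank} to the vector $-B\mathbf{1} \in \R^K$, where $\mathbf{1}$ denotes the all-ones vector in $\R^N$: this yields $r \in \R_{\geq 0}^N$ with $Br = -B\mathbf{1}$, so that $w \coloneqq r + \mathbf{1}$ is componentwise strictly positive and satisfies $Bw = 0$. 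The jump in $z$ at a spike of unit $i$ equals $-B^\intercal B e_i$, hence the jump in $y$ equals $-(Bw)^\intercal B_i = 0$; therefore $y$ is continuous, and between events \eqref{eq:dyn_projection} together with $Bw = 0$ yield
\begin{equation*}
\dot y(t) = -(\Lambda w)^\intercal z(t).
\end{equation*}

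Case 1 then follows at once: when $\Lambda = \lambda I$ the right-hand side equals $-\lambda\,y(t)$, so $\dot y = -\lambda y$ between events; combined with $y(0) = 0$ and continuity of $y$ across every event, this forces $y(t) \equiv 0$.

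For Case 2, the upper bound $y(t) \leq w^\intercal \mathrm{diag}(B^\intercal B) \eqqcolon M$ is immediate from $z_i(t) \leq B_i^\intercal B_i$ combined with $w > 0$. For the lower bound, I split $\Lambda w = \lambda_\mathrm{min}\,w + (\Lambda - \lambda_\mathrm{min}\,I)\,w$, both summands being nonnegative componentwise, and apply the pointwise upper bound $z_i \leq B_i^\intercal B_i$ inside the nonnegatively weighted sum:
\begin{equation*}
(\Lambda w)^\intercal z = \lambda_\mathrm{min}\,y + \sum_{i=1}^N (\lambda_i - \lambda_\mathrm{min})\,w_i\,z_i \leq \lambda_\mathrm{min}\,y + (\lambda_\mathrm{max} - \lambda_\mathrm{min})\,M.
\end{equation*}
This gives the differential inequality $\dot y(t) \geq -\lambda_\mathrm{min}\,y(t) - (\lambda_\mathrm{max} - \lambda_\mathrm{min})\,M$ between events, extended globally by the continuity of $y$. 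The comparison lemma (Lemma~\ref{lem:comparison}) applied to the reference ODE $\dot u = -\lambda_\mathrm{min}\,u - (\lambda_\mathrm{max} - \lambda_\mathrm{min})\,M$ with $u(0) = 0$, whose explicit solution $u(t) = -\tfrac{\lambda_\mathrm{max} - \lambda_\mathrm{min}}{\lambda_\mathrm{min}}\,M\,(1 - e^{-\lambda_\mathrm{min} t})$ is monotone and bounded below by $\gamma\,M$, then yields $y(t) \geq \gamma\,M$ for all $t \geq 0$.

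The main obstacle is the very first step: one has to recognise that the surjectivity condition \eqref{eq:full_row_rank} is strong enough to deliver a \emph{strictly} positive null vector of $B$, not merely a nonnegative one. Without strict positivity of all entries of $w$, the one-sided bound $z_i \leq B_i^\intercal B_i$ could not be deployed uniformly to upper-bound $(\Lambda w)^\intercal z$, and the clean one-dimensional comparison argument would break down. Once $w > 0$ with $Bw = 0$ is in hand, the rest is a routine Grönwall-type argument on the scalar $y$.
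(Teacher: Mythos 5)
Your proof is correct and follows essentially the same route as the paper: the same construction of a strictly positive $w$ with $Bw=0$ from \eqref{eq:full_row_rank}, the same cancellation of the input and reset terms giving $\tfrac{\del}{\del t}\,w^\intercal z = -w^\intercal \Lambda z$, and the same bound $w^\intercal \Lambda z \leq \lambda_\mathrm{min}\,w^\intercal z + (\lambda_\mathrm{max}-\lambda_\mathrm{min})\,w^\intercal\mathrm{diag}(B^\intercal B)$ (the paper derives it via the positive/negative parts of $z$, you via splitting $\Lambda w$). The only difference is the last step of Case 2, where the paper uses a barrier-type invariance argument while you integrate the differential inequality explicitly; both are routine and valid.
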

\begin{proof}
    There exist $q_i \in \R_{\geq 0}^N$ such that $Bq_i = -B_i$ due to \eqref{eq:full_row_rank} for all $i \in \{1,\dots N\}$. Then, let $w \coloneqq \sum_{i=1}^N e_i + q_i$, where $e_i$ is the unit vector in direction $i$. This yields $w_i \geq 1 $ and $Bw = \sum_{i=1}^N Be_i + Bq_i = 0$.
    Now, we obtain
    \begin{align}
        w^\intercal \dot z(t) &= -w^\intercal \Lambda z(t) +  w^\intercal B^\intercal k(x(t)) - w^\intercal B^\intercal B s(t) \notag \\
        &= - w^\intercal \Lambda z(t). \notag
    \end{align}
    
    \textit{1)} If $\Lambda = \lambda I$ for $\lambda \geq0 $, this implies using $z(0)=0$ that
    \begin{equation}
        w^\intercal z(t) = w^\intercal z(0) \cdot e^{-\lambda t} = 0, \notag
    \end{equation}
    which proves \eqref{eq:lem_lower_bound_case_1}.

    \textit{2)} If $\lambda_i > 0$ for all $i\in \{1,\dots,N\}$, and $\lambda_\mathrm{max} > \lambda_\mathrm{min} $, we need to relate $w^\intercal \Lambda z(t)$ to $w^\intercal z(t)$. We denote the positive and negative part of $z(t)$ by $z_+(t)$ and $z_-(t)$, respectively, such that $z(t) = z_+(t) - z_-(t)$. Moreover, let $d \coloneqq \mathrm{diag}(B^\intercal B)$. Then, it holds
    \begin{align}
        w^\intercal \Lambda z(t) &= w^\intercal \Lambda z_+(t) - w^\intercal \Lambda z_-(t) \notag \\
        &\leq \lambda_\mathrm{max}w^\intercal z_+(t) - \lambda_\mathrm{min} w^\intercal z_-(t) \notag \\
        &\leq (\lambda_\mathrm{max} - \lambda_\mathrm{min}) w^\intercal d + \lambda_\mathrm{min} w^\intercal z(t), \notag
    \end{align}
    where we used $z_+(t) \leq d$ elementwise. This implies
    \begin{alignat}{3}
        &&&w^\intercal \dot z(t) &&= -w^\intercal \Lambda z(t) \notag  \\
        &&&&&\geq (\lambda_\mathrm{min} - \lambda_\mathrm{max}) w^\intercal d - \lambda_\mathrm{min} w^\intercal z(t) \overset{(*)}{\geq} 0 \notag \\
        &\iff &&w^\intercal z(t) &&\leq \left(1- \frac{\lambda_\mathrm{max}}{\lambda_\mathrm{min}} \right)\cdot w^\intercal d = \gamma \, w^\intercal d , \notag
    \end{alignat}
    where the equivalence is with respect to the inequality marked by $(*)$. This implies with $w^\intercal z(0) = 0$ that $w^\intercal z(t) \geq \gamma \, w^\intercal \mathrm{diag}(B^\intercal B)$. The upper bound follows directly from $z_i(t) \leq B_i^\intercal B_i$.
\end{proof}

Let us consider the case $\Lambda = \lambda I$ for some $\lambda \geq 0$. Lemma \ref{lem:lower_bound} shows that a linear combination of the neuronal variables with strictly positive weights $w$ exists such that $w^\intercal z(t) = 0$ for all $t \geq 0$. Since $z(t)$ is uniformly bounded from above elementwise, we can conclude that $z(t)$ is also bounded from below elementwise. If $z_i(t) \geq 0$, the bound is trivial. If $z_i(t) < 0$, we have
\begin{align}
    z_i(t) &= \frac{1}{w_i} w_i z_i(t) \geq \sum_{j:z_j(t)<0} \frac{w_j z_j(t)}{w_i} \notag \\
    &= -\sum_{j\colon z_j(t) \geq0} \frac{w_j z_j(t)}{w_i} 
    \geq - \sum_{j:j\not = i} \frac{w_j B^\intercal_j B_j}{w_i} , \notag
\end{align}
where we used Lemma \ref{lem:lower_bound}, and that $z_j(t) \leq B^\intercal_j B_j$. If $\lambda_i > 0$ for all $i\in \{1,\dots,N\}$, and $\lambda_\mathrm{max} > \lambda_\mathrm{min} $, it follows similarly that
\begin{equation}
    z_i(t) \geq  \frac{\gamma \, w^\intercal \mathrm{diag}(B^\intercal B)}{w_i} - \sum_{j:j\not = i} \frac{w_j B^\intercal_j B_j}{w_i}. \notag
\end{equation}
In both cases, this implies that $d_1(t)$ and $d_2(t)$ are uniformly bounded.


\section{Numerical experiments}
\label{sec:numerics}

We simulate the event-based impulsive controllers introduced in the previous sections to illustrate our theoretical results. First, we consider the scalar linear case of Theorem \ref{thm:stability_linear_scalar} using the controller defined in Remark \ref{rem:literature} with $x(0)=2.6$. For $a=1$, $\alpha=1$, and $\theta=1/2.5$, we obtain $b=2.5$, and hence the condition $b\geq a + \lambda$ is not fulfilled for $\lambda = 3$. In this case, Fig.~\ref{fig:2_1dim} (left) shows that $x(t)$ never reaches zero due to the leakage, and $x(t)$ is indeed lower than the bound \eqref{eq:thm_stab_lin_0}. If $\lambda=1.5$, \eqref{eq:cor_lin_2} of Corollary~\ref{cor:stability_improved} applies, and we can see from Fig.~\ref{fig:2_1dim} (right) that $x(t)$ oscillates around zero and is more tightly bound by \eqref{eq:cor_lin_2}. The middle panel of Fig.~\ref{fig:2_1dim} shows that the bound is tight for $\lambda = 1.5$, i.e., when $b=a+\lambda$.

Next, we analyse whether the condition $a<b$ is necessary for stability. We consider the same controller as before with $\lambda\in\{0,3\}$, $a\in [0,5]$ and $\theta \in [1/5,\infty)$, such that $b \in (0,5]$. We initialise the plant at $x(0)=50.5$ and simulate for time $T=200$. If the control system is stable and $x(0)$ is far away from the ultimate bound, it should hold $x(0) > x(T)$, and otherwise it should hold $x(0) < x(T)$. We choose $C=\log\big(|x(T)/x(0)|\big)/T$ as a heuristic stability measure, approximating the Lyapunov exponent. Fig.~\ref{fig:2_heatmap} shows that the system is stable, $C<0$, if and only if $a<b$. At $a=b$, \eqref{eq:thm_stab_lin_2} shows that the system is only stable if $\lambda=0$. Since a positive leakage only induces a linear growth rather than an exponential growth, this difference is not visible in Fig.~\ref{fig:2_heatmap}.

\begin{figure}[htbp]
\centerline{\includegraphics[width=0.48\textwidth]{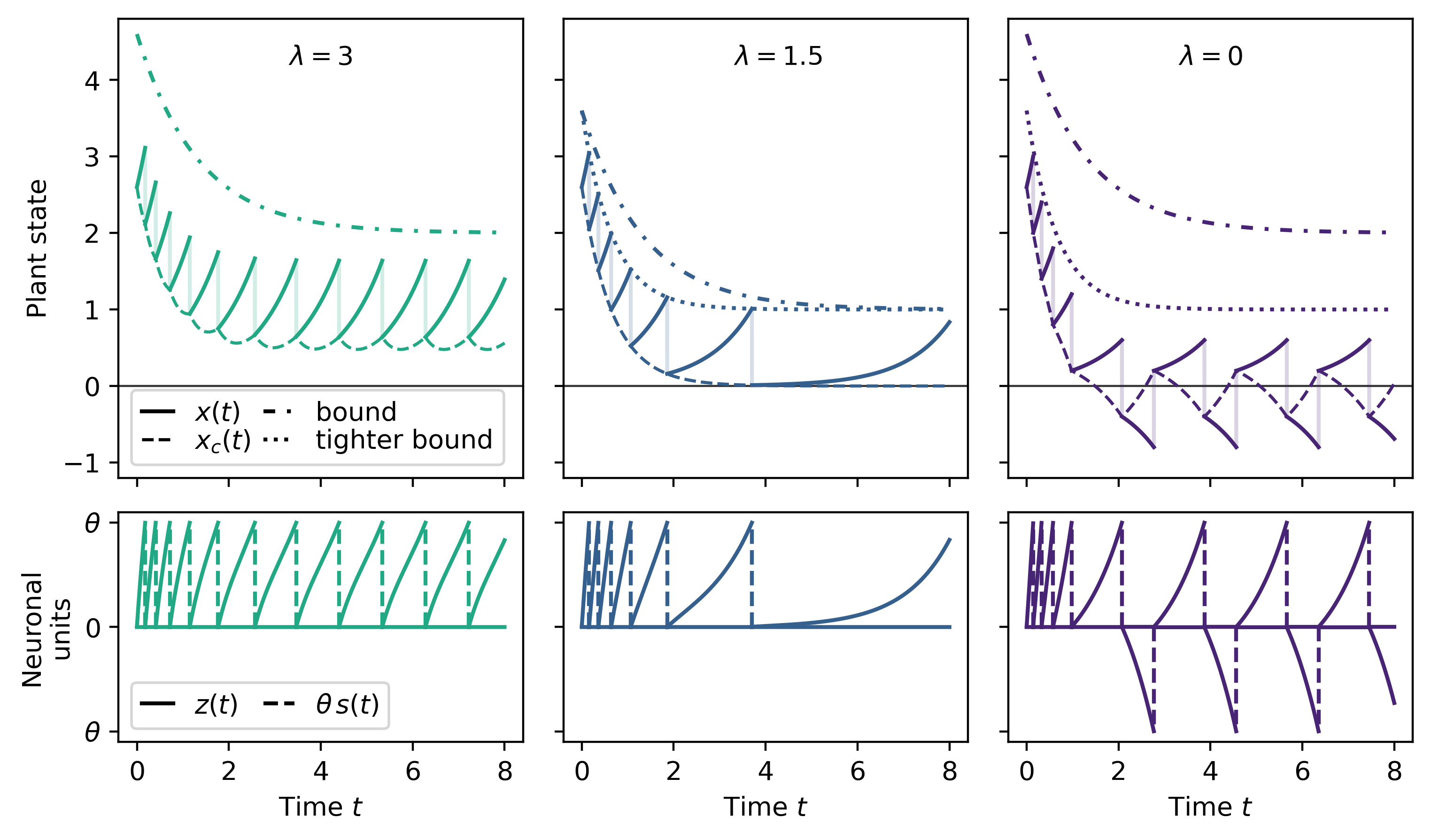}}
\caption{
A one-dimensional linear plant controlled by an event-based impulsive controller. The state $x(t)$ (solid) with discontinuities (shaded), the continuous variable $x_c(t)$ (dashed), the bound \eqref{eq:thm_stab_lin_0} (dash-dotted), and the tighter bound \eqref{eq:cor_lin_2} (dotted) are shown in the upper plots for $\lambda=3$ (left, cyan), $\lambda = 1.5$ (middle, blue), and $\lambda=0$ (right, purple). The two neuronal variables $z(t)$ (solid), and the rescaled spikes $\theta s(t)$ (dashed) are shown in the lower plots (units 1 and 2 in the top and bottom half, respectively). The control system is given by $f(x) = x$, $B=[-1,1]$, $\theta=1/2.5$, and $g(x) = \big[[x]_+, [-x]_+\big]^\intercal$.
}
\label{fig:2_1dim}
\end{figure}

\begin{figure}[htbp]
\centerline{\includegraphics[width=0.48\textwidth]{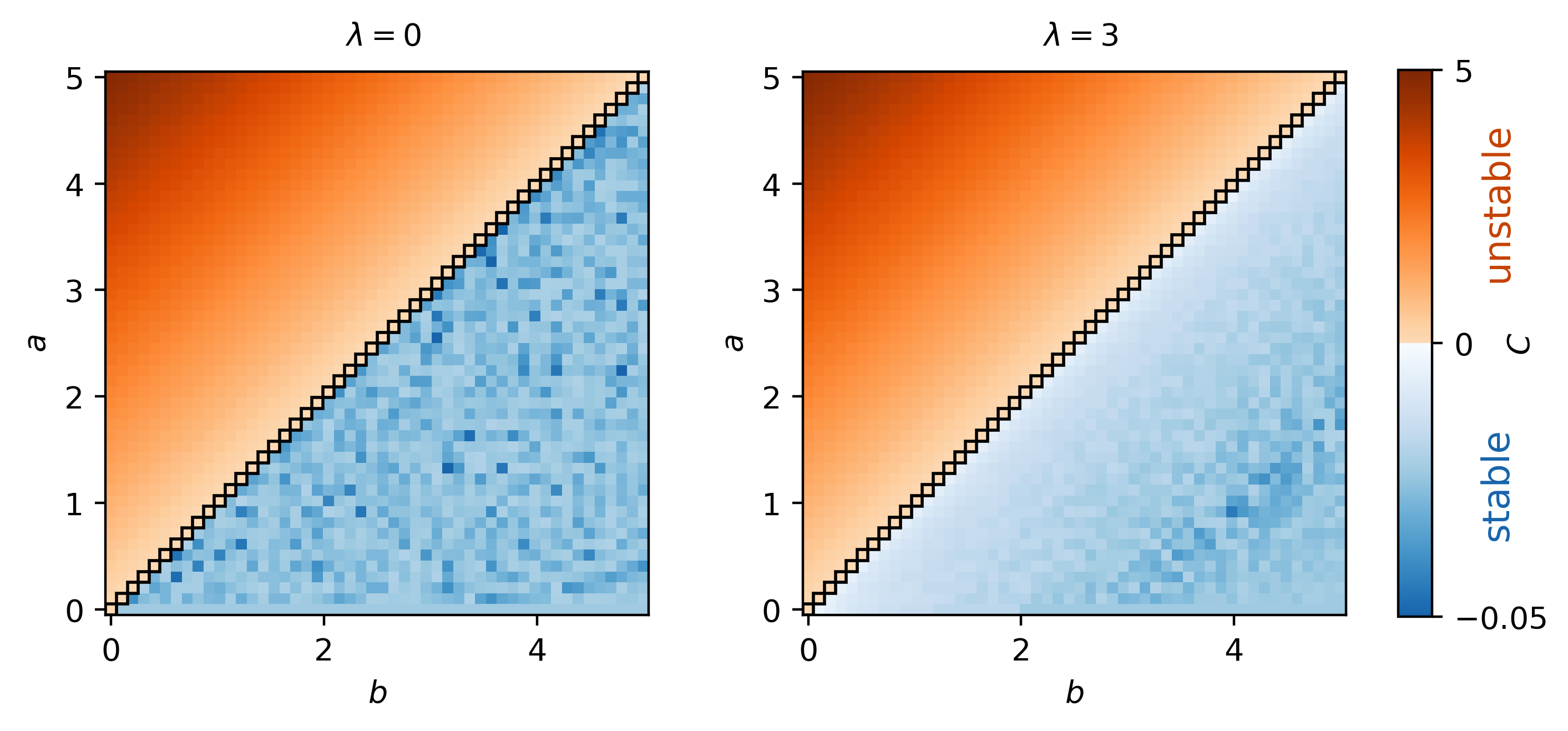}}
\caption{
Heatmap of the stability of the event-based impulsive control system as a function of $a$ and $b$. We consider $f(x)=ax$, $B/\theta\cdot g(x) = -bx$, $\lambda = 0$ (left) and $\lambda = 3$ (right), and $B,g$ as in Fig. \ref{fig:2_1dim}. The stability for a pair $a,b\in[0,5]$ is measured by $C=\log\big(| x(T)/x(0)|\big)/T$ for $x(0)=50.5$ and $T=200$. The black squares on the diagonal indicate $a=b$.
}
\label{fig:2_heatmap}
\end{figure}

In Corollary \ref{cor:stab_improv_mult}, we obtained an additional term compared to the scalar case that depends on rotations. To illustrate its impact, we consider a two-dimensional state with the same dynamics as before on the radial axis and additional rotation on the angular axis modulated by $\omega$, $f(x) = \left(\begin{smallmatrix}
    1 & \omega \\
    -\omega & 1
\end{smallmatrix}\right)\,x$. For the controller, we use the previous controller for the two Cartesian axes $x_0$ and $x_1$, respectively, which results in $N=4$ neuronal units:
$
    B=
    \left(\begin{smallmatrix}
    -1 & 1 & 0 & 0\\
    0 & 0 & -1 & 1
    \end{smallmatrix}\right)$, $g(x) = \big[[x_0]_+, [-x_0]_+, [x_1]_+, [-x_1]_+\big]^\intercal
$. In Fig. \ref{fig:3_2dim}, we simulate the control system for $a=1$, $\lambda=0.2$ and $\theta=1/1.5$, such that Corollary \ref{cor:stab_improv_mult} applies with $\lVert S \Vert = \omega/(2(b-a))$. For both weak and strong rotations, $\omega \in \{0.5, 3\}$, the state $x(t)$ converges within the ball with radius given by the ultimate bound. As $\omega$ increases, the ultimate bound becomes more conservative.

\begin{figure}[htbp]
\centerline{\includegraphics[width=0.48\textwidth]{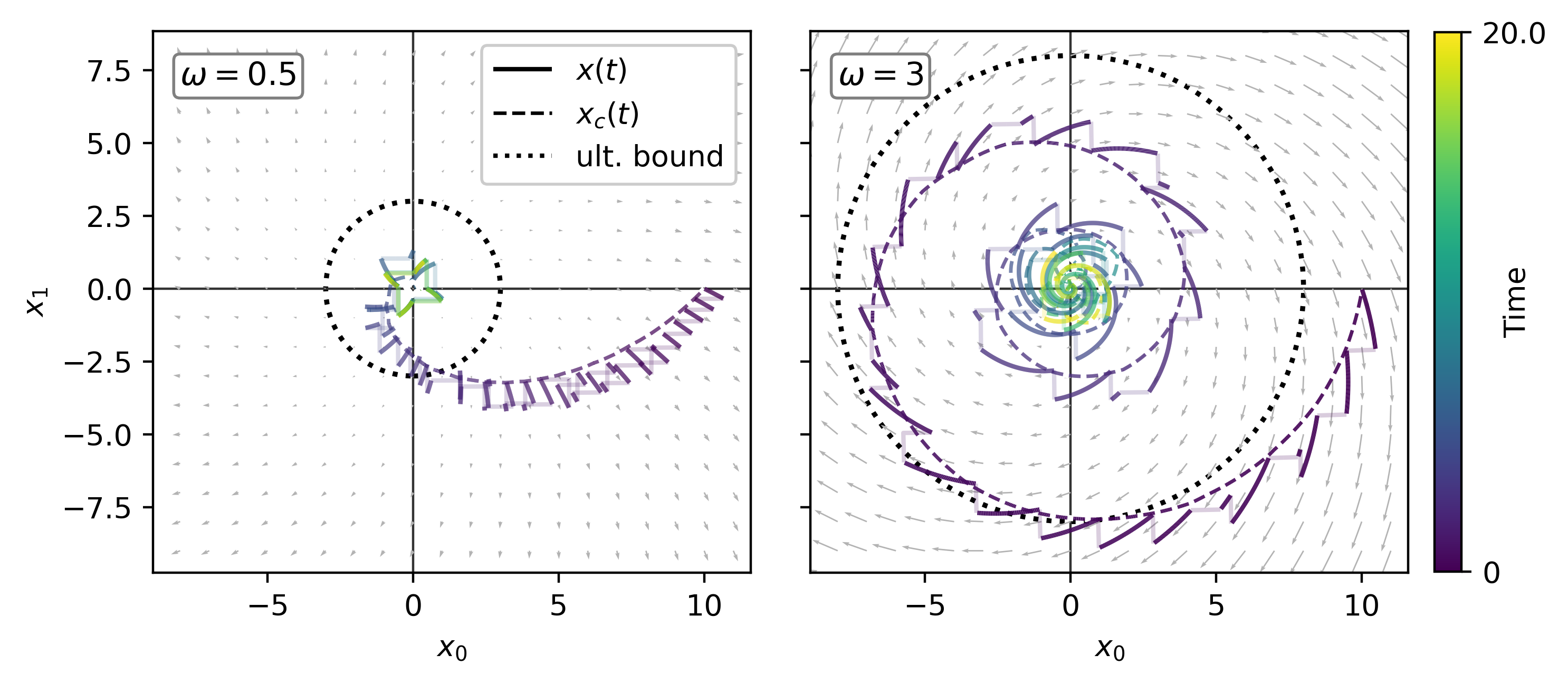}}
\caption{
A two-dimensional linear plant controlled by an event-based impulsive controller. The state $x(t)$ (solid) with discontinuities (shaded), the continuous variable $x_c(t)$ (dashed), and a ball with radius given by the ultimate bound (dotted) are shown for $\omega = 0.5$ (left) and $\omega=3$ (right). The control system is given by $f(x) = \left(\begin{smallmatrix}
    1 & \omega \\
    -\omega & 1
\end{smallmatrix}\right)\,x$, $B=\left(\begin{smallmatrix}
    -1 & 1 & 0 & 0\\
    0 & 0 & -1 & 1
\end{smallmatrix}\right)$, $g(x) = \big[[x_0]_+, [-x_0]_+, [x_1]_+, [-x_1]_+\big]^\intercal$, $\lambda=0.2$ and $\theta=1/1.5$.
}
\label{fig:3_2dim}
\end{figure}

\section{Conclusion}
\label{sec:conclusion}

We have analysed the stability and inter-event time properties of an event-based impulsive controller with neuronal dynamics. By leveraging the explicit dynamics of the plant and neuronal variables, we introduced an auxiliary variable free of discontinuities. This revealed a fundamental link between the event-based impulsive controller and a corresponding analogue controller, allowing us to apply standard arguments for analogue control theory. 
Using this novel method, we proved global practical asymptotic stability for nonlinear systems and global practical exponential stability for linear ones, and derived lower bounds for the inter-event times.

From a control theory perspective, the current work can be extended in the following ways. First, we expect our results to 
generalise to variable target states.
Second, the controller
only depends on the current error, rather than its history or derivative. One could consider causal operators instead. Finally, one could explore whether the stability of the analogue control system is not only sufficient but also necessary for the practical stability of the event-based impulsive control system, as our simulations suggest.


The presented model can also be extended in various directions from a biological perspective. Although we consider multiple neuronal units, their connectivity is constrained.
The class of connectivities to which our method is applicable remains to be identified.
Further, spikes could be abstracted by kernel convolutions instead of simple impulses. 
We expect our framework to extend naturally to this case, with the corresponding analogue controller sharing the same convolutional dynamics. Finally, many neuron models entail mechanisms for time regularisation, such as refractory periods or spike frequency adaptation, which pose interesting challenges for the stability analysis.

We here established a connection between event-based impulsive and analogue controllers in terms of stability. This relation lays the foundation for a systematic mapping between the two control types, extending beyond stability. Such a mapping will depend on neuronal properties such as leakage, refractoriness, and spike frequency adaptation. Importantly, such a framework could clarify the advantages of event-based impulsive control over analogue control with respect to plant dynamics and control costs.

\appendix

\begin{lem}[Special case of the comparison lemma]
\label{lem:comparison}
    Let $v(t) \colon [0,\infty) \to \R$ be a continuous and differentiable function and constants $\alpha, \beta \in \R$ with $\alpha \not= 0$ such that
    \begin{equation}
        \dot v(t) \leq \alpha v(t) + \beta \qquad \text{for all } t\geq0. \notag
    \end{equation}
    Then, it holds that
    \begin{equation}
        v(t) \leq \left( v(0) + \frac{\beta}{\alpha} \right) \cdot e^{\alpha t} - \frac{\beta}{\alpha} \qquad \text{for all } t\geq0. \notag
    \end{equation}
\end{lem}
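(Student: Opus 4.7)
The plan is to reduce the affine differential inequality to a homogeneous one by a simple shift, and then apply the standard integrating-factor argument (which is essentially Grönwall's inequality for linear ODEs). The statement implicitly assumes $\alpha \neq 0$, since $\beta/\alpha$ appears in the bound; the case $\alpha = 0$ would just give $v(t) \leq v(0) + \beta t$ by direct integration and can be treated separately if needed.

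First I would define $u(t) \coloneqq v(t) + \beta/\alpha$. Since $\beta/\alpha$ is a constant, $\dot u(t) = \dot v(t)$, and the hypothesis becomes
\begin{equation*}
    \dot u(t) = \dot v(t) \leq \alpha v(t) + \beta = \alpha\bigl(v(t) + \beta/\alpha\bigr) = \alpha\, u(t).
\end{equation*}
So the problem is reduced to the purely homogeneous inequality $\dot u(t) \leq \alpha\, u(t)$.

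Next I would apply the integrating factor $e^{-\alpha t}$. Let $w(t) \coloneqq u(t)\, e^{-\alpha t}$. Then $\dot w(t) = (\dot u(t) - \alpha\, u(t))\, e^{-\alpha t} \leq 0$, so $w$ is non-increasing. Integrating from $0$ to $t$ gives $w(t) \leq w(0)$, that is, $u(t)\, e^{-\alpha t} \leq u(0)$, hence $u(t) \leq u(0)\, e^{\alpha t}$. Translating back in terms of $v$ yields
\begin{equation*}
    v(t) + \beta/\alpha \leq \bigl(v(0) + \beta/\alpha\bigr) e^{\alpha t},
\end{equation*}
which rearranges to the claimed bound.

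I do not anticipate any real obstacle: the only subtlety is that $v$ is only assumed differentiable (not necessarily $C^1$), but this is enough to ensure that the absolutely continuous function $w(t) = (v(t)+\beta/\alpha)e^{-\alpha t}$ has non-positive derivative almost everywhere, so its value at $t$ is bounded by its value at $0$. The shift trick is the key idea, and it is what makes the routine Grönwall argument directly applicable.
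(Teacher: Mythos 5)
Your proof is correct, but it takes a more self-contained route than the paper. The paper's proof constructs the explicit comparison function $w(t) = \left(v(0)+\beta/\alpha\right)e^{\alpha t}-\beta/\alpha$, verifies that it solves $\dot w = \alpha w + \beta$ with $w(0)=v(0)$, and then invokes the general comparison lemma (Lemma 3.4 of Khalil) as a black box to conclude $v(t)\leq w(t)$. You instead prove the inequality from scratch: the shift $u=v+\beta/\alpha$ reduces the affine inequality to $\dot u \leq \alpha u$, and the integrating factor $e^{-\alpha t}$ turns that into monotonicity of $u(t)e^{-\alpha t}$. Your version buys independence from the cited lemma at the cost of a few extra lines; the paper's version is shorter but outsources the analytic content. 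Two small remarks: you rightly flag that $\alpha\neq 0$ is implicitly assumed (the paper does not comment on this), and your appeal to absolute continuity at the end is unnecessary --- since $v$ is assumed differentiable everywhere, $w(t)=(v(t)+\beta/\alpha)e^{-\alpha t}$ is differentiable everywhere with $\dot w\leq 0$, so monotonicity follows directly from the mean value theorem.
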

\begin{proof}
    Let us consider
    \begin{equation}
        w(t) \coloneqq \left( v(0) + \frac{\beta}{\alpha} \right) \cdot e^{\alpha t} - \frac{\beta}{\alpha} . \notag
    \end{equation}
    It follows that $w(t)$ satisfies $w(0)=v(0)$ and
    \begin{equation}
        \dot w(t) = \alpha w(t) + \beta . \notag
    \end{equation}
    By the comparison lemma (Lemma 3.4 of \cite{khalil2002nonlinear}), we conclude that
    \begin{equation}
        v(t) \leq w(t) = \left( v(0) + \frac{\beta}{\alpha} \right) \cdot e^{\alpha t} - \frac{\beta}{\alpha}. \notag
    \end{equation}
\end{proof}

\begin{thm}[Tighter bound for Theorem \ref{thm:stability_linear_scalar}]
\label{thm:stability_improved_appendix}
    Let the assumptions of Theorem \ref{thm:stability_linear_scalar} hold, and let $g(x)=(g_1(x),g_2(x))$. If $b \geq a + \lambda$, and if $g_1(x)=0$ if $x\leq0$, and $g_2(x)=0$ if $x\geq0$, then the event-based impulsive control system is globally practically exponentially stable,
    \begin{equation}
        | x(t) | \leq | x(0) | \cdot e^{(a-b)\cdot t} + \lVert B \rVert_{\ui{2}}.
        \label{eq:cor_lin_3}
    \end{equation}
\end{thm}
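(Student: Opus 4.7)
The plan is to follow the phase decomposition sketched in the body of the paper. By the one-sided structure of $g$, at most one neuronal unit integrates at a time, so the interval $[0, \infty)$ naturally splits into sign-definite phases of $x(t)$ within which exactly one of $z_1, z_2$ is nonzero. Without loss of generality I take $x(0) > 0$; the case $x(0) < 0$ is symmetric and $x(0) = 0$ is trivial. I first handle the divergent case $a > 0$, and the convergent case $a \leq 0$ is addressed at the end.

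In Phase 1, $t \in [0, t_1)$ with $t_1$ the earliest event time at which the sign of $x$ switches, the assumption $g_2(x) = 0$ for $x \geq 0$ yields $z_2 \equiv 0$, so $B z(t)/\theta = B_1 z_1(t)/\theta$. The identity $k(x) = B\Theta^{-1} g(x) = -bx$ together with $g_1 > 0$ on the positive reals forces $B_1 < 0$, and hence $B z(t)/\theta \leq 0$; combined with the hypothesis $b \geq a + \lambda$ (equivalently $a - b + \lambda \leq 0$), \eqref{eq:thm_stab_lin_1} gives $\dot x_c(t) \leq (a - b) x_c(t)$. The comparison lemma (Lemma \ref{lem:comparison}) then yields $x_c(t) \leq x(0)\, e^{(a - b) t}$, and combined with \eqref{eq:bound_y-x} and $x(t) > 0$ this gives $|x(t)| \leq |x(0)|\, e^{(a - b) t} + \lVert B \rVert_{\ui{2}}$ on $[0, t_1)$.

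For Phase 2, $t \geq t_1$, the impulse at $t_1$ leaves $|x(t_1^+)| \leq |B_1| \leq \lVert B \rVert_{\ui{2}}$, while the continuity of $x_c$ preserves $x_c(t_1^+) = x_c(t_1^-) \leq x(0)\, e^{(a - b) t_1}$. On the ensuing negative-sign subinterval the mirror argument (now with $B_2 > 0$ and $B z(t)/\theta \geq 0$) yields $\dot x_c(t) \geq (a - b) x_c(t)$, and hence a matching lower envelope $x_c(t) \geq -|x(0)|\, e^{(a - b) t}$. In the special case $|B_1| = |B_2|$ one can argue further, as the body sketch indicates, that every subsequent event flips the sign of $x$, which confines $|x(t)|$ to $[0, \lVert B \rVert_{\ui{2}}]$ for $t \geq t_1$ and is therefore consistent with \eqref{eq:cor_lin_3}. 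Iterating across subsequent sign-definite phases then closes the argument.

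The main obstacle is the general case $|B_1| \neq |B_2|$. A naive iteration of the Phase 1 argument with $|x(t_1^+)| \leq \lVert B \rVert_{\ui{2}}$ as a new initial condition only yields $|x(t)| \leq 2 \lVert B \rVert_{\ui{2}}$, which is the bound of Corollary \ref{cor:stability_improved}. To obtain the tighter \eqref{eq:cor_lin_3} one has to enumerate the consecutive firings of the same unit between sign changes and sum their contributions geometrically, using $b \geq a + \lambda$ once more to ensure that each partial sum stays within the envelope. The convergent case $a \leq 0$ follows the same phase decomposition and is simpler, since between events $|x|$ does not grow and the Phase 1 bound transfers directly across events.
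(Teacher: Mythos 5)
Your Phase~1 argument and the equal-amplitude case $|B_1|=|B_2|$ match the paper's proof, but there are two genuine gaps. First, your claim that $g_1>0$ on the positive reals ``forces $B_1<0$'' is only true when $b>0$: from $-bx=\tfrac{1}{\theta}b_1g_1(x)$ for $x\geq 0$ with $g_1\geq 0$, one gets $b_1<0$ if $b>0$ but $b_1>0$ if $b<0$. The case $b\leq 0$ is not vacuous under the hypotheses (e.g.\ $a=-2$, $\lambda=1$, $b=-0.5$ satisfies $b>a$ and $b\geq a+\lambda$); there the impulses push the state \emph{away} from the target, the sign of $x$ never changes, and one must instead show $x_c(t)\geq x(t)$ and bound $x_c$ from above via the comparison lemma. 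Your proof silently excludes this case.

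Second, and more seriously, the case $|B_1|\neq|B_2|$ with $a\geq 0$ --- which you yourself identify as ``the main obstacle'' --- is left as a one-sentence programme (``enumerate the consecutive firings \ldots{} and sum their contributions geometrically'') rather than an argument, and the suggested strategy is not the one that works. The paper does not sum impulse contributions; it integrates the LIF dynamics explicitly between events to obtain $\theta=\lim_{t\uparrow t_1}z_1(t)$, which converts the threshold condition into the identity $x(t_1)=x(t_0)e^{-\lambda(t_1-t_0)}+b_1(b-a-\lambda)/b$. From this it deduces (i) $x(t_1)\geq b_1$ after the first sign flip, (ii) that the state values at consecutive same-unit events form a monotone sequence so that $x(t)\geq b_1$ throughout the negative phase (this is needed because for $a>0$ and $x<0$ the drift makes $x$ \emph{more} negative between events, so a geometric summation of impulse sizes alone does not control the excursion), and (iii) a sufficient condition guaranteeing that the sign flips back at the next event of the other unit, which closes the induction. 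Without step (ii) in particular, your iteration only delivers the $2\lVert B\rVert_{\ui{2}}$ bound of Corollary~\ref{cor:stability_improved}, as you note. Also, your ``matching lower envelope $x_c(t)\geq -|x(0)|e^{(a-b)t}$'' in Phase~2 needs the initial condition $x_c(t_1)\geq -|x(0)|e^{(a-b)t_1}$, which is not automatic when $|x(0)|e^{(a-b)t_1}$ has already decayed below $|b_1|$; the correct target in Phase~2 is the absolute bound $|x(t)|\leq\lVert B\rVert_{\ui{2}}$, not a decaying envelope.
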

\begin{proof}
    Let $z(t)=(z_1(t),z_2(t))$. We first consider the assumptions on the input function $g$. If $x(0)=0$, it follows $x(t)=0$ for all $t\geq0$ since $g_1(0)=g_2(0)=0$. We denote the first event time by $t_0$ and consider $x(0)>0$. We assume right-continuity of all variables. Due to the uncontrolled dynamics given by $f(x)=ax$, it follows $x(t)>0$ for all $t \in [0,t_0)$, and hence $z_2(t)=0$ for all $t\in[0,t_0]$ due to $z(0)=(0,0)$. At $t_0$, it again holds $z(t_0)=(0,0)$ and either $x(t_0)>0$ or $x(t_0)\leq0$. The same reasoning holds for $x(0)<0$, but in this case we have that $z_1(t)=0$ for all $t\in[0,t_0]$. Hence, the neuronal units are only sequentially active, depending on the sign of $x$ after the last impulse.

    Next, we consider the input function directly. Let $B=(b_1, b_2)$. This implies $\lVert B \rVert_{\ui{2}} = \max(|b_1|,|b_2|)$ if $b_1$ and $b_2$ have opposite signs, which is always the case as we will show. By assumption, it holds that
    \begin{align}
        -bx &= B\Theta^{-1}g(x) \notag \\
        &= \frac{1}{\theta} \big(b_1g_1(x)+b_2g_2(x) \big) = \begin{cases}
            \frac{b_1}{\theta} g_1(x) \quad &\text{if } x\geq0, \\
            \frac{b_2}{\theta}g_2(x) \quad &\text{if } x \leq 0.
        \end{cases} \notag
    \end{align}
    If $b_1=0$ or $b_2=0$, it follows $b=0$ and $x(t)=x(0)\,e^{at}$, which trivially satisfies \eqref{eq:cor_lin_3}. Hence, let $b_1,b_2 \not = 0$, which implies
    \begin{equation}
        g_i(x) = \begin{cases}
            -\frac{b\theta}{b_1} [x]_+ &\text{if }i=1, \\
            \frac{b\theta}{b_2} [-x]_+ &\text{if } i=2.
        \end{cases} \notag
    \end{equation}

    \textit{Case 1, $b<0$:} Since $g_i(x) \geq 0$ for $i\in \{1,2\}$, it holds $\frac{b\theta}{b_1} \leq 0$ and $\frac{b\theta}{b_2} \geq 0$, which yields $b_1>0$ and $b_2<0$. Hence, the first neuronal unit, which is only active if the state is strictly positive, leads to further increases of the state, and conversely, the second neuronal unit, which is only active if the state is strictly negative, leads to further decreases of the state. In consequence, the state will never change its sign if $b<0$. Let us hence assume that $x(t)>0$ for all $t\geq 0$. This implies $z_2(t) = 0$ for all $t\geq0$, and thus
    \begin{equation}
        x_c(t) = x(t) + b_1 z_1(t)/\theta \geq x(t). \notag
    \end{equation}
    Further, it implies $B\Theta^{-1}z(t) = b_1 z_1(t)/\theta > 0$ for $t \geq 0$. If we insert this into \eqref{eq:thm_stab_lin_1}, we obtain
    \begin{align}
        \xcd{\dot}(t) &= (a-b)x_c(t) - (a-b+\lambda) b_1 z_1(t)/\theta \notag \\
        &\leq (a-b)x_c(t) + (b-a-\lambda) b_1, \label{cor:proof:ext5}
    \end{align}
    using $b-a-\lambda \geq 0$ and $b_1 > 0$. The comparison lemma (Lemma \ref{lem:comparison}) now implies that
    \begin{align}
        \xcd{}(t) &\leq \frac{b-a-\lambda}{b-a} b_1 + \left( \xcd{}(0) - \frac{b-a-\lambda}{b-a} b_1 \right)\cdot e^{(a-b)t} \notag \\
        &\leq x_c(0)\cdot e^{(a-b)t} + b_1. \notag
    \end{align}
    Using $\xcd{}(0)=x(0)$ and $x_c(t) \geq x(t)$, this yields
    \begin{equation}
        x(t) \leq x(0)\cdot e^{(a-b)t} + \lVert B \rVert_{\ui{2}} . \notag
    \end{equation}
    Similar arguments can be applied if $x(t)<0$ for all $t\geq0$.
    
    \textit{Case 2, $b>0$:} Opposite to the case $b<0$, it holds $b_1<0$ and $b_2>0$. Hence, the first neuronal unit, which is only active if the state is strictly positive, leads to decreases of the state, and conversely, the second neuronal unit, which is only active if the state is strictly negative, leads to increases of the state. Let us assume that $x(0)>0$ and let us denote by $t_1$ the first event time where $x(t_1)<0$. This time might not exist, in which case we set $t_1=\infty$. It now holds that $x(t) > 0$ for $t \in [0,t_1)$, and similar to \eqref{cor:proof:ext5} we obtain
    \begin{align}
        \xcd{\dot}(t) 
        &= (a-b)\xcd{}(t) - (a-b+\lambda) b_1 z_1(t) /\theta \notag \\
        &\leq (a-b) x_c(t), \notag
    \end{align}
    for $t \in [0,t_1)$ using $a-b+\lambda \leq 0$. The comparison lemma (Lemma \ref{lem:comparison}) now implies that
    \begin{equation}
        x_c(t) \leq x(0)\cdot e^{(a-b)t} \quad \text{for all } t \in [0,t_1), \notag
    \end{equation}
    where we used $\xcd{}(0) = x(0)$. If $x_c(t)\geq0$, we have $|\xcd{}(t)| \leq |x(0)|\cdot e^{(a-b)t}$, and with \eqref{eq:bound_y-x} this yields
    \begin{equation}
        |x(t)| \leq |x(0)|\cdot e^{(a-b)t} + \lVert B \rVert_{\ui{2}} . \notag
    \end{equation}
    If $\xcd{}(t)<0$, $x(t)$ and $\xcd{}$ have opposite signs, which implies again using \eqref{eq:bound_y-x} that $|x(t)| \leq \lVert B \rVert_{\ui{2}}$. In total, this shows
    \begin{equation}
        |x(t)| \leq |x(0)|\cdot e^{(a-b)t} + \lVert B \rVert_\ui{2} \quad \text{for all } t\in[0,t_1).
    \end{equation}
    This concludes the proof if $t_1 = \infty$. For $t_1<\infty$, we analyse the conditions that enable a sign change of the state.
    
    \textit{Case 2.1, $a<0$:} If the sign of the state changes at time $t_1$, we can conclude that $|x(t_1)| \leq \lVert B \rVert_{\ui{2}} $. Since the uncontrolled dynamics drive the state further towards zero, and the impulses lead to jumps towards zero with overshoots of at most $\max(|b_1|,|b_2|)= \lVert B \rVert_{\ui{2}}$, we can conclude that $|x(t)| \leq \lVert B \rVert_{\ui{2}} $ for all $t\geq t_1$.

    \textit{Case 2.2, $a\geq 0$:}
    We denote by $t_0$ the event time just before $t_1$. If $t_1$ is the first event, we set $t_0=0$. We can solve \eqref{eq:LIF_2} using $z_1(t_0)=0$ for $t\in [t_0,t_1)$, which yields
    \begin{align}
        z_1(t) &= \int_{t_0}^t e^{-\lambda(t-s)} \cdot\frac{b\theta}{-b_1} x(t_0)\cdot e^{as} \,\del s \label{eq:diff_eq_solution_z1} \\
        &=\frac{b\theta}{-b_1(a+\lambda)} x(t_0) \left( e^{a(t-t_0)} - e^{-\lambda(t-t_0)} \right) .\notag 
    \end{align}
    Since $\theta = \lim_{t \uparrow t_1} z_1(t)$, this implies
    \begin{equation}
        \theta = \frac{b\theta}{-b_1(a+\lambda)} x(t_0) \left( e^{a(t_1-t_0)} - e^{-\lambda(t_1-t_0)} \right), \notag
    \end{equation}
    which yields
    \begin{equation}
        x(t_0)\cdot e^{a(t_1-t_0)}  = x(t_0)\cdot e^{-\lambda(t_1-t_0)}+\frac{-b_1(a+\lambda)}{b} .\label{cor:proof:ext1}
    \end{equation}
    Using this, we can express $x(t_1)$ as
    \begin{align}
        x(t_1) &= x(t_0)\cdot e^{a(t_1-t_0)} + b_1 \notag \\
        \overset{\eqref{cor:proof:ext1}}&{=} x(t_0) \cdot e^{-\lambda(t_1-t_0)} + \frac{-b_1(a+\lambda)}{b} + b_1 \notag \\
        &= x(t_0) \cdot e^{-\lambda(t_1-t_0)} + \frac{b_1(b-a-\lambda)}{b} \label{cor:proof:ext2} \\
        &\geq \frac{b_1(b-a-\lambda)}{b} \geq b_1. \label{cor:proof:ext3}
    \end{align}
    Further, we can show a sufficient condition for $x(t_1)\leq 0$:
    \begin{align}
        &x(t_0) \leq \frac{-b_1(b-a-\lambda)}{b}\label{cor:proof:ext4} \\
        \implies &x(t_0)\cdot e^{-\lambda(t_1-t_0)} \leq \frac{-b_1(b-a-\lambda)}{b} \notag \\
        \implies &x(t_1) \overset{\eqref{cor:proof:ext2}}{=} x(t_0) \cdot e^{-\lambda(t_1-t_0)} +  \frac{b_1(b-a-\lambda)}{b} \leq 0 . \notag
    \end{align}
    Let $t_2$ denote the time of the event after the event $t_1$. Repeating the calculations \eqref{eq:diff_eq_solution_z1}--\eqref{cor:proof:ext3} for $t\in[t_1,t_2)$ and $x(t_1)<0$ yields
    \begin{align}
        x(t_2) &= x(t_1) \cdot e^{-\lambda(t_2-t_1)} +\frac{b_2(b-a-\lambda)}{b} \label{cor:proof:ext8} \\
        &\leq \frac{b_2(b-a-\lambda)}{b}. \label{cor:proof:ext6}
    \end{align}
    With a similar approach to \eqref{cor:proof:ext4}, we obtain a sufficient condition for $x(t_2) \geq 0$,
    \begin{equation}
        x(t_1) \geq \frac{-b_2(b-a-\lambda)}{b} \implies x(t_2) \geq 0. \label{cor:proof:ext7}
    \end{equation}

    \textit{Case 2.2.1, $-b_1=b_2$:} If we apply \eqref{cor:proof:ext3}, \eqref{cor:proof:ext7}, \eqref{cor:proof:ext6}, and \eqref{cor:proof:ext4} in this order sequentially, we see that the sign of $x(t)$ changes at every event time after $t_0$, cf. right panel of Fig.~\ref{fig:2_1dim}. Since the impulses have size $|b_1|$, we obtain $|x(t)| \leq |b_1| = \lVert B \rVert_{\ui{2}}$ for $t\geq t_1$. Similar arguments can be applied if $x(0)<0$.

    \textit{Case 2.2.2, $-b_1 \not =b_2$:} Let us assume without loss of generality that $|b_1|>|b_2|$, i.e., $0 > b_1+b_2$, $x(t_0)>0$, and again $x(t_1)<0$. We want to ensure that $|x(t)| \leq |b_1| = \lVert B \rVert_{\ui{2}}$ for all $t \in [t_1,t_2)$, where $t_2$ denotes the next event time such that $x(t_2)>0$. For $t \in [t_1,t_2)$, only the second neuronal unit is active, and we denote by $t_k', t_{k+1}'$ two arbitrary subsequent event times with $t_1 \leq t_k' < t_{k+1}' \leq t_2$. Similar to \eqref{cor:proof:ext8}, it holds
    \begin{equation}
        x(t_{k+1}') = x(t_k')\cdot e^{-\lambda (t_{k+1}'-t_k')} + b_2 \frac{b-a-\lambda}{b} \geq x(t_k'). \notag
    \end{equation}
    By repeating this argument, we obtain that $x(t_k') \geq x(t_1)$.    
    From \eqref{cor:proof:ext3}, we know that $x(t_1) \geq \frac{b_1(b-a-\lambda)}{b}$. Since $a>0$, we obtain for $t\in [t_1,t_2)$ that
    \begin{align}
        x(t) &\geq \max_k \lim_{s \uparrow t_{k+1}'} x(s) 
        = \max_k x(t_k')\cdot e^{a(t_{k+1}'-t_k')} \notag \\
        &= \max_k x(t_k') \cdot e^{-\lambda (t_{k+1}'-t_k')} + \frac{-b_2(a+\lambda)}{b} \notag \\
        &\geq x(t_1) + \frac{-b_2(a+\lambda)}{b} \notag \\
        \overset{\eqref{cor:proof:ext3}}&{\geq} \frac{b_1(b-a-\lambda)}{b} - \frac{b_2(a+\lambda)}{b} \notag \\
        &= b_1 - \frac{(b_1+b_2)(a+\lambda)}{b} \geq b_1, \notag 
    \end{align}
    where we used that $a, \lambda, b\geq 0$, $b_1+b_2<0$, and the second line follows from similar calculations as \eqref{cor:proof:ext1}. By \eqref{cor:proof:ext6}, it holds
    \begin{equation}
        x(t_2) \leq \frac{b_2(b-a-\lambda)}{b} \leq \frac{-b_1(b-a-\lambda)}{b}. \notag 
    \end{equation}
    Hence, condition \eqref{cor:proof:ext4} is fulfilled and the sign of $x(t)$ again changes at the next event time. The same reasoning can then be iterated, and similar arguments can be applied if $x(0)<0$.
\end{proof}

\addtolength{\textheight}{-2.5cm}

\begin{lem}[Stability under bounded disturbances]
\label{lem:appendix}
    If the closed-loop system \eqref{eq:dyn_sys_disturbed} is ISS with respect to disturbances $d_1(t),d_2(t)$, and there exist $D_1,D_2 \geq 0$ such that $\lVert d_1(t)\rVert\leq D_1$ and $\lVert d_2(t) \rVert \leq D_2$ for all $t\geq 0$, then the system is globally practically asymptotically stable, with functions $\beta$ of class $\mathcal{KL}$ and $\eta_1, \eta_2$ of class $\mathcal{K}_\infty$ such that
    \begin{equation}
        \lVert y(t) \rVert \leq \beta(\lVert y(0) \rVert, t) + \eta_1(D_1) + \eta_2(D_2).
    \end{equation}
\end{lem}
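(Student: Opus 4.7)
The plan is to argue directly from the ISS Lyapunov function of Definition \ref{def:ISS} and reduce the proof to a standard comparison argument for a scalar differential inequality. Let $V$ be the ISS Lyapunov function for \eqref{eq:dyn_sys_disturbed}. Using the uniform bounds $\lVert d_1(t)\rVert \leq D_1$ and $\lVert d_2(t)\rVert \leq D_2$ together with the monotonicity of $\gamma_1, \gamma_2$, the inequality \eqref{eq:ISS} evaluated along trajectories yields $\dot V(y(t)) \leq -\alpha(\lVert y(t)\rVert) + c$ with $c \coloneqq \gamma_1(D_1)+\gamma_2(D_2)$.

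The first step is to convert this into a scalar differential inequality for $V(y(t))$ alone. From $V(y) \leq \bar\alpha(\lVert y\rVert)$ we get $\lVert y\rVert \geq \bar\alpha^{-1}(V(y))$, so by monotonicity of $\alpha$ it holds that $-\alpha(\lVert y\rVert) \leq -\tilde\alpha(V(y))$, where $\tilde\alpha \coloneqq \alpha \circ \bar\alpha^{-1}$ is of class $\mathcal{K}_\infty$. Hence $\dot V(y(t)) \leq -\tilde\alpha(V(y(t))) + c$. Outside the sublevel set $\{V \leq \tilde\alpha^{-1}(2c)\}$, this simplifies to $\dot V \leq -\tfrac{1}{2}\tilde\alpha(V)$, so $V$ is strictly decreasing until it enters this sublevel set, after which it cannot leave by more than a harmless boundary layer.

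The second step is to extract a class-$\mathcal{KL}$ bound on $V(y(t))$. By the standard comparison argument for scalar inequalities of the form $\dot V \leq -\tfrac{1}{2}\tilde\alpha(V)$ outside a sublevel set (cf. the proof of Theorem 4.19 of \cite{khalil2002nonlinear}), there exists $\tilde\beta \in \mathcal{KL}$, depending only on $\tilde\alpha$, such that $V(y(t)) \leq \tilde\beta(V(y(0)), t) + \tilde\alpha^{-1}(2c)$ for all $t \geq 0$. Converting back to the Euclidean norm via $\lVert y\rVert \leq \underline\alpha^{-1}(V(y))$ and $V(y(0)) \leq \bar\alpha(\lVert y(0)\rVert)$, and using the inequality $\sigma(a+b) \leq \sigma(2a) + \sigma(2b)$ valid for $\sigma \in \mathcal{K}_\infty$ twice — first to separate the $\mathcal{KL}$ term from the ultimate bound, and then to separate $\gamma_1(D_1)$ from $\gamma_2(D_2)$ inside the ultimate bound — yields the claim with $\beta(r,t) \coloneqq \underline\alpha^{-1}(2\tilde\beta(\bar\alpha(r), t))$ and $\eta_i(D_i) \coloneqq \underline\alpha^{-1}(2\tilde\alpha^{-1}(4\gamma_i(D_i)))$, both of which are easily checked to lie in the correct classes.

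The main obstacle is the second step, where one must construct $\tilde\beta$ from the scalar differential inequality. This construction is classical but technical, so I would invoke the standard ISS-Lyapunov theorem (Theorem 4.19 of \cite{khalil2002nonlinear}) rather than rederive it, keeping the proof short and focused on the reduction itself.
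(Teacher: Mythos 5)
Your proof is correct and follows essentially the same route as the paper's: both reduce the ISS inequality to the scalar differential inequality $\dot V \leq -(\alpha\circ\bar\alpha^{-1})(V(y)) + \gamma_1(D_1)+\gamma_2(D_2)$, extract a class-$\mathcal{KL}$ bound plus a constant by a comparison argument, and then convert back through $\ubar{\alpha}^{-1}$ using the weak triangle inequality $\mu(a+b)\leq\mu(2a)+\mu(2b)$, arriving at the same form of $\beta$ and $\eta_i$. The only divergence is the middle step: where you invoke the sublevel-set argument behind Khalil's Theorem 4.19 (absorbing the constant via $\dot V \leq -\tfrac{1}{2}\tilde\alpha(V)$ outside $\{V\leq\tilde\alpha^{-1}(2c)\}$), the paper instead shifts to $W = V - \alpha_1^{-1}(E+F)$ and constructs the new class-$\mathcal{K}_\infty$ function $\alpha_2(W)=\alpha_1(W+C)-\alpha_1(C)$ so that the comparison lemma applies globally; this is a self-contained variant of the same classical construction and buys nothing essentially different.
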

\begin{proof}
     Since the system is ISS, there exist class $\mathcal{K}_\infty$ functions $\bar \alpha, \ubar{\alpha}, \alpha, \gamma_1$ and $\gamma_2$ such that for all $t\geq0$,
    \begin{align}
        &\ubar{\alpha}(\lVert y(t) \rVert) \leq V(y(t)) \leq \bar \alpha (\lVert y(t) \rVert), \label{eq:thm_ISS_0}\\
        &\dot V(y(t)) \leq -\alpha (\lVert y(t) \rVert) + \gamma_1(\lVert d_1(t) \rVert) +\gamma_2(\lVert d_2(t)\rVert).
        \label{eq:thm_ISS_1}
    \end{align}
   By assumption, it holds that $\gamma_1(\lVert d_1(t) \rVert) \leq \gamma_1(D_1) \eqqcolon E$, and $\gamma_2(\Vert d_2(t)\rVert) \leq \gamma_2(D_2) \eqqcolon F$ for all $t\geq0$.
   Since the inverse of a strictly increasing function is strictly increasing, \eqref{eq:thm_ISS_0} yields $\bar{\alpha}^{-1}(V(y(t))) \leq \lVert y(t) \rVert$. Using \eqref{eq:thm_ISS_1}, this implies
    \begin{equation}
        \dot V(y(t)) \leq -\alpha\left(\bar\alpha^{-1}(V(y(t)))\right) + E + F
    \end{equation}
    Let $\alpha_1 \coloneqq \alpha \circ \bar\alpha^{-1}$, $C \coloneqq \alpha_1^{-1}(E + F)$, and $W(t) \coloneqq V(y(t)) - C$. We then obtain
    \begin{align}
        \dot W(t) &= \dot V(y(t)) \leq -\alpha_1(V(y(t))) + E + F \\
        &= -\alpha_1\left(W(t) + C\right) + \alpha_1(C) \eqqcolon -\alpha_2(W(t)).
    \end{align}
    Note that $\alpha_2$ is of class $\mathcal{K}_\infty$ since $\alpha_2(0) = -\alpha_1(0+C) + \alpha_1(C)=0$. Let $\beta_1(r,t)$ be the solution to 
    \begin{equation}
        \dot{\widetilde W}(t)=-\alpha_2\left(\widetilde{W}(t)\right), \quad \widetilde W(0)=r.
    \end{equation}
    Then, $\beta_1$ is of class $\mathcal{KL}$ and by the comparison lemma (Lemma 3.4 of \cite{khalil2002nonlinear}) we obtain
    \begin{equation}
        W(t) \leq \beta_1(W(0),t) = \beta_1\left(V(y(0))-C,t\right),
    \end{equation}
    which implies 
    \begin{align}
        V(y(t)) &= W(t) + C \leq \beta_1(V(x(0))-C,t) + C \\
        \overset{\eqref{eq:thm_ISS_0}}&{\leq} \beta_1(\bar\alpha(\lVert x(0) \rVert),t)+C .
    \end{align}
    Again using \eqref{eq:thm_ISS_0} we obtain
    \begin{align}
        \lVert y(t) \rVert &\leq \ubar\alpha^{-1}(V(y(t))) \\
        &\leq \ubar\alpha^{-1}(\beta_1(\bar\alpha(\lVert y(0) \rVert),t)+C) \\
        &\leq  \ubar\alpha^{-1}(2\beta_1(\bar\alpha(\lVert y(0) \rVert),t)) + \ubar\alpha^{-1}(2C) \\
        &\leq \beta(\lVert y(0)\rVert,t) + \eta_1(D_1) + \eta_2(D_2),
    \end{align}
    where $(r,t) \mapsto \beta(r,t) \coloneqq \ubar\alpha^{-1}(2\beta_1(\bar\alpha(r),t))$ is of class $\mathcal{KL}$, and $r \mapsto \eta_1(r) \coloneqq \ubar\alpha^{-1}(4\alpha_1^{-1}(2\gamma_1(r)))$, $r \mapsto \eta_2(r) \coloneqq \ubar\alpha^{-1}(4\alpha_1^{-1}(2\gamma_2(r)))$ are of class $\mathcal{K}_\infty$. The last two inequalities follow from $\mu(a+b) \leq \mu(2\max(a,b))\leq\mu(2a)+\mu(2b)$ for any $a,b\geq 0$ and strictly increasing $\mu$.
\end{proof}


\bibliographystyle{IEEEtran}
\bibliography{IEEEabrv,lib}

\end{document}